\newcommand{\FFnorm}[1]{|\!|\!| #1 |\!|\!|}
\newtheorem{assumption}{Assumption}  % Add this before \begin{document}
\newtheorem{remark}{Remark}  % Add this before \begin{document}
\newenvironment{proof}{\noindent\textbf{Proof.}\ }{\hfill$\Box$\par}
\titlespacing{\section}{0pt}{1.4ex}{1.4ex}
\titlespacing{\subsection}{0pt}{0.8ex}{0.8ex}
\titlespacing{\subsubsection}{0pt}{0.4ex}{0.4ex}
\newtheorem{definition}{Definition}
 \newtheorem{theorem}{Theorem}
  \newtheorem{lemma}{Lemma} 
  \newtheorem{property}{Property}
   \newtheorem{corollary}{Corollary}
\newtheorem{proposition}{Proposition}
\DeclareMathAlphabet{\mathcal}{OMS}{cmsy}{m}{n}
\newcommand{\R}{{\mathbb{R}}}
\newcommand{\N}{{\mathbb{N}}}
\newcommand{\x}{\textbf{x}}
\newcommand{\zono}[1]{\langle #1 \rangle}
\newcommand{\myitem}[1]{%
\item[#1]\protected@edef\@currentlabel{#1}}
\newcommand{\vast}{\bBigg@{4}}
\newcommand{\Vast}{\bBigg@{5}}
\DeclareRobustCommand{\nand}{\mathbin{\mathpalette\n@and@or\land}}
\DeclareRobustCommand{\nor}{\mathbin{\mathpalette\n@and@or\lor}}
\DeclareRobustCommand{\enand}{\overline{\mathbin{\mathpalette\n@and@or\land}}}
\DeclareRobustCommand{\enor}{\overline{\mathbin{\mathpalette\n@and@or\lor}}}
\newcommand{\n@and@or}[2]{%
  \vphantom{#2}%
  \ooalign{$\m@th#1#2$\cr\hidewidth$\m@th#1\sim$\hidewidth\cr}%
}
\begin{document}

\begin{frontmatter}

\title{Online Data-Driven Reachability Analysis using Zonotopic Recursive Least Squares}

\thanks[footnoteinfo]{Corresponding author Alireza Naderi Akhormeh.}

\author[tum]{Alireza Naderi Akhormeh}\ead{alireza.naderi@tum.de},    % Add the 
\author[tum,guc]{Amr Hegazy}\ead{amr.hazem@student.guc.edu.eg}, and
\author[tum]{Amr Alanwar}\ead{alanwar@tum.de}

\address[tum]{Technical University of Munich, Germany\vspace{-2mm}} 

\address[guc]{German University in Cairo, Egypt\vspace{-2mm}}

%\maketitle

\begin{keyword}                           % Five to ten keywords,  
Data-driven Reachability analysis, robust estimation, set-based parameter estimation, zonotopic recursive least squares, system identification, safety analysis.              % chosen from the IFAC 
\end{keyword}

\begin{abstract}
Reachability analysis is a key formal verification technique for ensuring the safety of modern cyber–physical systems subject to uncertainties in measurements, system models (parameters), and inputs. Classical model-based approaches rely on accurate prior knowledge of system dynamics, which may not always be available or reliable. To address this, we present a data-driven reachability analysis framework that computes over-approximations of reachable sets directly from online state measurements. The method estimates time-varying unknown models using an Exponentially Forgetting Zonotopic Recursive Least Squares (EF-ZRLS) method, which processes data corrupted by bounded noise. Specifically, a time-varying set of models that contains the true model of the system is estimated recursively, and then used to compute the forward reachable sets under process noise and uncertain inputs. Our approach applies to both discrete-time Linear Time-Varying (LTV) and nonlinear Lipschitz systems. Compared to existing techniques, it produces less conservative reachable-set over-approximations, remains robust under slowly varying dynamics, and operates solely on real-time data without requiring any pre-recorded offline experiments. Numerical simulations and real-world experiments validate the effectiveness and practical applicability of the proposed algorithms.
\end{abstract}

\end{frontmatter}

\section{Introduction}
Reachability analysis in cyber-physical systems identifies all bounded reachable sets to formally avoid unsafe sets. Reachability computation is categorized as model-based when the analytical dynamics of the dynamic system are known, and data-driven when only state or output measurements are available. In model-based analysis, different approaches such as support function \cite{le2010reachability}, level set \cite{mitchell2000level}, and zonotope \cite{althoff2008reachability} are commonly used. Certain methods, such as level set techniques, can provide an exact representation of reachable sets for different system classes. However, these methods rely on predefined models, which are often derived from first principles. In many real-world dynamic systems, constructing an accurate model is non-trivial, which makes model-based reachability approaches challenging when an exact system model is unavailable.
\begin{figure}[thp]
    \centering
    \includegraphics[scale=0.55]{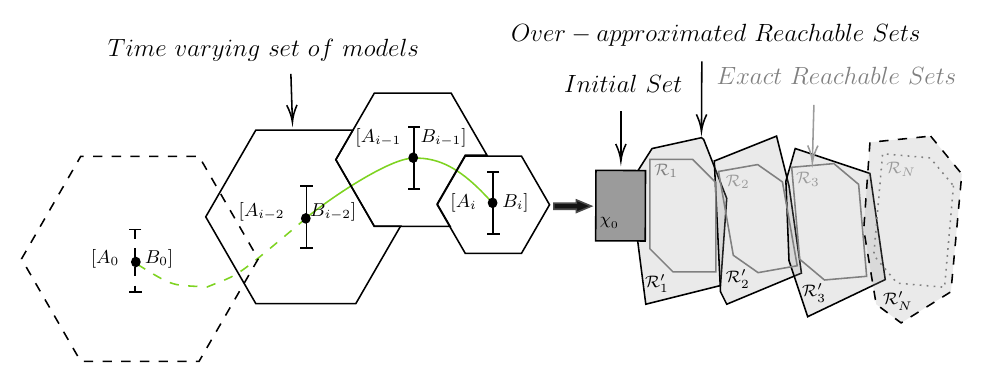}
    \caption{The reachable sets consistent with noisy input–state data for a time-varying dynamical system.}
    \label{fig:efzrls}
    \vspace{-1mm}
\end{figure}

Data-driven reachability analysis, unlike model-based approaches, enables the computation of over-approximate reachable sets directly from observed input-output data, even when an explicit system model is unavailable. This paradigm constructs a data-driven representation from recorded or real-time trajectories, which is then used to approximate the system’s reachable set. Recent advances leverage deep learning to solve the Hamilton-Jacobi (HJ) Partial Differential Equation (PDE) for high-dimensional systems. For instance, \cite{bansal2021deepreach} employs a self-supervised PDE loss to learn the value function offline, assuming known system dynamics (the Hamiltonian is analytically computed, not learned). In contrast, \cite{chilakamarri2024reachability} extends this framework to partially unknown dynamics by incorporating data-driven Hamiltonian approximation. Both methods operate offline but enable real-time reachability queries once trained. While these methods can compute more accurate reachable sets than other approaches, they require substantial training data and struggle with scalability in high-dimensional systems due to computational complexity.\par 

Using the zonotope, which can reduce computational cost in robust estimation, could be beneficial. Zonotopes have been applied in reachability analysis of logical systems \cite{ALANWAR2025111896}, hybrid logical and dynamical systems \cite{BIRD2023111107,COMBASTEL2022110457}, as well as state estimation and fault diagnosis \cite{REGO2025112380}. Consequently, zonotope-based approaches can compute more conservative reachable sets with less data and offer better scalability in higher-dimensional systems. In \cite{alanwar2023data}, reachable sets of discrete-time linear and nonlinear systems are computed using a zonotope-based approach by estimating a set of models from noisy input-state data. By collecting multiple trajectories, the method identifies a set of consistent models that account for known bounded noise corrupting the measurements. Although the method effectively handles uncertainty and model estimation from collected data using the least squares method and has been successfully applied to provide safety in control of nonlinear systems \cite{farjadnia2023robust}, it still requires pre-collected trajectories to construct the set of data-driven models. This makes it challenging when system models change over time steps or when there is difficulty in collecting offline data. To overcome model estimation challenges, recursive model estimation methods can be used to estimate and update models over time steps with new data. A set-based zonotopic recursive least squares method, introduced in \cite{samada2023zonotopic}, was designed for model estimation in a single-state (single-output regression model) Linear Time-Invariant (LTI) system.\par

The main idea of the proposed data-driven reachability analysis framework is illustrated in Figure \ref{fig:efzrls}. Our proposed method computes data-driven reachable sets using a recursive matrix zonotope formulation. To ensure that the reachable set encloses all possible system trajectories derived from finite noisy data, we construct a time-varying matrix zonotope that captures all models consistent with the noisy measurements, rather than relying on a single potentially inaccurate model. This is achieved through a set-based EF-ZRLS method, which estimates the models of the multi-output, time-varying regression models whose outputs are corrupted by bounded noise. The true system models are guaranteed to lie within these sets. Using the latest identified matrix zonotope that contains the true model, we propagate the initial state set forward to obtain the corresponding over-approximated reachable sets based on the ideas presented in \cite{alanwar2023data} for a discrete-time LTV system. Finally, we extend the framework to address a broader class of discrete-time nonlinear Lipschitz systems.\par

The main contributions of this paper are summarized as follows:

\begin{enumerate}
    
    \item A set-based EF-ZRLS method is proposed for estimating the unknown model of a time-varying multi-output regression model with measurements corrupted by bounded noise (Proposition \ref{pro:rze}). The optimality of the EF-ZRLS is established, and the corresponding optimal model estimation correction gain is derived (Theorem \ref{th:optimalgain}). The proposed EF-ZRLS formulation is applicable to both vector and matrix zonotopes.
    
    \item Algorithm \ref{alg:LTIreach1} is introduced to compute over-approximated reachable sets (Theorem \ref{th:reach_lin}) for discrete-time LTV systems \eqref{eq:sys} subject to bounded process noise. 

    \item Algorithm \ref{alg:LipReachability} is introduced to compute over-approximated reachable sets (Theorem \ref{th:reachdisnonlin}) for a class of discrete-time nonlinear Lipschitz systems \eqref{eq:sysnonlingen} subject to bounded process noise.
    
    \item A real-world experiment, two numerical examples, and a comparative analysis are presented, contrasting the proposed data-driven reachability analysis with the alternative model-based approach.

\end{enumerate}

To recreate our results, readers can use our publicly available 
code\footnote{\href{https://github.com/TUM-CPS-HN/ZRLS}{https://github.com/TUM-CPS-HN/ZRLS}} 
and a video of experimental results\footnote{\href{https://www.youtube.com/watch?v=6va1PpCKL9A}{https://www.youtube.com/watch?v=6va1PpCKL9A}}.

\begin{comment}
\end{comment}
 The remainder of the paper is organized as follows. Section \ref{sec:prelim} presents the problem formulation and relevant preliminaries. In the subsequent section, we propose a robust recursive set-based estimation method to estimate the unknown models of a time-varying regression model corrupted by noisy data. In Section \ref{sec:RLSReachability}, we introduce our proposed data-driven reachability analysis method for LTV and nonlinear Lipschitz continuous dynamical systems. Section \ref{sec:evauation}, evaluates the effectiveness of the proposed method through numerical examples and an experimental case study. Finally, Section \ref{sec:con} concludes the paper and discusses potential directions for future work.

\section{Problem Formulation and Preliminaries }\label{sec:prelim}
In this section, we present the problem statement, the preliminary definitions for reachability analysis, and introduce details about the notation used throughout this work.
\subsection{Problem Formulation}
We consider a discrete-time system
\begin{align}
\begin{split}
    x_{k+1} &= f(x_k,u_k)+ w_k.\\
\end{split}
    \label{eq:sysnonlingen}
\end{align}
where $f:\mathbb{R}^{n_x}\times\mathbb{R}^{n_u} \rightarrow \mathbb{R}^{n_x}$ a differentiable unknown function, $w_k \in \mathcal{Z}_w \subset \mathbb{R}^{n_x}$ denotes the noise bounded by a noise zonotope $\mathcal{Z}_w$, ${u_k \in \mathcal{U}_k}$ the input bounded by an input zonotope $\mathcal{U}_k \in \mathbb{R}^{n_u}$, $x_k\in\mathbb{R}^{n_x}$ the measured state, and ${x_0 \in \mathcal{X}_0 \subset \mathbb{R}^{n_x}}$ the initial state $x_0$ of the system bounded by the initial set $\mathcal{X}_0$.\par 
Reachability analysis computes the set of states $x_k$ that can be reached given a set of uncertain initial states $\mathcal{X}_0$ and a set of uncertain inputs $\mathcal{U}_k$. More formally, it can be defined as follows:
\begin{definition} (\textbf{Exact Reachable Set})
The exact reachable set $\mathcal{R}_{N}$ after $N$ time steps subject to inputs ${u_k \in \mathcal{U}_k}$, $\forall k \in\{ 0, \dots, N-1\}$, and noise $w_k \in \mathcal{Z}_w$, is the set of all states trajectories starting from initial set $\mathcal{X}_0$ after $N$ steps: 
\begin{align} \label{eq:R}
        \mathcal{R}_{N} = \big\{& x_N \in \mathbb{R}^{n_x} \, \big| x_{k+1} = f(x_k,u_k) + w_k, x_0 \in \mathcal{X}_0, \nonumber\\
       & \,  u_k \in \mathcal{U}_k, w_k \in \mathcal{Z}_w:  \forall k \in \{0,...,N{-}1\}\big\}.
\end{align}
\end{definition}
%Instead of having access to a mathematical model of the system, 
We consider a sliding window single input-state trajectory with lengths $N_D$, denoted by $\{u_i\}$, and $\{x_i\}$, for all $i\ge0$. %We collect the set of all data sequences in the following matrices
% \begin{align*}
% X &= \begin{bmatrix} x_{-N_D},x_{-N_D+1}, \dots, x_{0}\end{bmatrix}.
% \end{align*}
Let us denote the sliding window using the shifted signals as follows
\begin{align*}
X_+ &= \begin{bmatrix}x_{i-N_D+1},x_{i-N_D+2}, \dots, x_{i} \end{bmatrix}, \\
X_- &= \begin{bmatrix}x_{i-N_D},x_{i-N_D+1}, \dots, x_{i-1} \end{bmatrix},\\
U_- &= \begin{bmatrix} u_{i-N_D},u_{i-N_D+1}, \dots, u_{i-1} \end{bmatrix}.
\end{align*}
We denote the set of all available data by $D_{-}=(X_-,U_-)$ and $D=(X_+,X_-,U_-)$. 

We aim to compute an over-approximation of the exact reachable sets when the model of the system in~\eqref{eq:sysnonlingen} is unknown, but input and noisy states are available at each time step.\par

\subsection{Notation}
The set of real and natural numbers are denoted as $\mathbb{R}$ and $\mathbb{N}$, respectively, and $\mathbb{N}_0 = \mathbb{N} \cup \{0\}$. 
The $2$-norm of a vector $\bar{x} \in \mathbb{R}^{\bar{n}}$ is defined as $\|\bar{x}\|_2 := \sqrt{\sum_{i=1}^{\bar{n}} |\bar{x}_i|^2}$. 
For a matrix $A \in \mathbb{R}^{n_A \times m_A}$, the Frobenius norm is defined as $\|A\|_F := \sqrt{\sum_{i=1}^{n_A} \sum_{j=1}^{m_A} |a_{ij}|^2} = \|\mathrm{vec}(A)\|_2$, and the matrix infinity norm is defined as $\|A\|_{max} := \max_{i,j}|a_{ij}|$. 
The transpose and Moore–Penrose pseudoinverse of a matrix $X$ are denoted as $X^\top$ and $X^\dagger$, respectively. 
We denote the Kronecker product by $\otimes$, the element at row $i$ and column $j$ of matrix $A$ by $(A)_{i,j}$, and column $j$ of $A$ by $(A)_{.,j}$. 
The vectorization of a matrix $A$ is defined by $\mathrm{vec}(A) \in \mathbb{R}^{n_A m_A}$, obtained by stacking the columns of $A$ into a single column vector. Conversely, the inverse operation $\mathrm{unvec}(\bar{x})$ reshapes a vector $\bar{x} \in \mathbb{R}^{n_A m_A}$ back into a matrix in $\mathbb{R}^{n_A \times m_A}$. For a list or vector of elements, we denote the element $i$ of vector or list $a$ by $a^{(i)}$. The element-wise (Hadamard) product of two matrices is denoted by $\odot$. 
We use $\mathcal{Z}_1 \ominus \mathcal{Z}_2$ to denote $\mathcal{Z}_1 \oplus (-1) \mathcal{Z}_2$, which is not the Minkowski difference. 
%We denote the over-approximation of a reachable set $\mathcal{R}_k$ by an interval by $\text{int}(\mathcal{R}_k)$. 
We also define for $N$ time steps
\begin{align*}
    \mathcal{F} = \cup_{k=0}^{N} (\mathcal{R}_k \times \mathcal{U}_k).
    \label{eq:F}
\end{align*}

\subsection{Set Representations}
We start by defining some set representations that are used in the reachability analysis. 
%We define the following sets:

%,Girard2005
\begin{definition}(\textbf{Zonotope} \cite{conf:zono1998}) \label{def:zonotopes} 
Given a center $c_{\mathcal{Z}} \in \mathbb{R}^{n_\mathcal{Z}}$ and $\gamma_{\mathcal{Z}} \in \mathbb{N}$ generator vectors in a generator matrix $G_{\mathcal{Z}}=\begin{bmatrix} g_{\mathcal{Z}}^{(1)}& \dots &g_{\mathcal{Z}}^{(\gamma_{\mathcal{Z}})}\end{bmatrix} \in \mathbb{R}^{n_\mathcal{Z} \times \gamma_{\mathcal{Z}}}$, a zonotope is defined as
%\cite[Def.~1]{Girard2005} 
\begin{equation}\label{eq:vector_zonotope}
	\mathcal{Z} = \Big\{ x \in \mathbb{R}^{n_\mathcal{Z}} \; \Big| \; x = c_{\mathcal{Z}} + \sum_{i=1}^{\gamma_{\mathcal{Z}}} \beta^{(i)} \, g^{(i)}_{\mathcal{Z}} \, ,
	\left|\beta^{(i)}\right| \leq 1 \Big\} \; .
\end{equation}
We use the shorthand notation $\mathcal{Z} = \zono{c_{\mathcal{Z}},G_{\mathcal{Z}}}$ for a zonotope. %\hfill $\square$
\end{definition}
Let $L \in \mathbb{R}^{m_L \times n_\mathcal{Z}}$ be a linear map. Then $L\mathcal{Z}= \zono{Lc_{\mathcal{Z}},LG_{\mathcal{Z}}}$ \cite[p.18]{conf:thesisalthoff}.

Given two zonotopes $\mathcal{Z}_1=\langle c_{\mathcal{Z}_1},G_{\mathcal{Z}_1} \rangle$ and $\mathcal{Z}_2=\langle c_{\mathcal{Z}_2},G_{\mathcal{Z}_2} \rangle$, the Minkowski sum $\mathcal{Z}_1 \oplus \mathcal{Z}_2 = \{z_1 + z_2| z_1\in \mathcal{Z}_1, z_2 \in \mathcal{Z}_2 \}$ can be computed exactly as follows \cite{conf:zono1998}: %\cite{conf:zono1998}:
\begin{equation}
     \mathcal{Z}_1 \oplus \mathcal{Z}_2 = \Big\langle c_{\mathcal{Z}_1} + c_{\mathcal{Z}_2}, [G_{\mathcal{Z}_1}, G_{\mathcal{Z}_2} ]\Big\rangle.
     \label{eq:minkowski}
\end{equation}
We define and compute the Cartesian product of two zonotopes $\mathcal{Z}_1 $ and $\mathcal{Z}_2$ by 
\begin{align}\label{eq:cart}
\mathcal{Z}_1 \times \mathcal{Z}_2 &= \bigg\{ \begin{bmatrix}z_1 \\ z_2\end{bmatrix} \bigg| z_1 \in \mathcal{Z}_1, z_2 \in \mathcal{Z}_2 \bigg\} \nonumber\\
&= \Bigg\langle \begin{bmatrix} c_{\mathcal{Z}_1} \\ c_{\mathcal{Z}_2} \end{bmatrix}, \begin{bmatrix} G_{\mathcal{Z}_1} & 0 \\ 0 & G_{\mathcal{Z}_2}\end{bmatrix} \Bigg\rangle.
\end{align}

\begin{definition}\label{def:matzonotopes}(\textbf{Matrix Zonotope} \cite[p.52]{conf:thesisalthoff})  
Given a center matrix $C_{\mathcal{M}} \in \mathbb{R}^{n_\mathcal{M} \times m_\mathcal{M}}$ and $\gamma_{\mathcal{M}} \in \mathbb{N}$ generator matrices $\check{G}_{\mathcal{M}}=\begin{bmatrix}
     G_{\mathcal{M}}^{(1)}\dots G_{\mathcal{M}}^{(\gamma_{\mathcal{M}})} \end{bmatrix} \in \mathbb{R}^{n_\mathcal{M} \times (m_\mathcal{M} \gamma_{\mathcal{M}})}$, a matrix zonotope is defined as
%\cite[Def.~1]{Girard2005} 
\begin{equation}\label{eq:matrix_zonotope}
	\mathcal{M} {=} \Big\{ X \in \mathbb{R}^{n_\mathcal{M} \times m_\mathcal{M}} \; \Big| \; X {=} C_{\mathcal{M}} + \sum_{i=1}^{\gamma_{\mathcal{M}}} \beta^{(i)} \, G^{(i)}_{\mathcal{M}} \, ,
	\left|\beta^{(i)}\right| \leq 1\Big\} .
\end{equation}
We use the shorthand notation $\mathcal{M} = \zono{C_{\mathcal{M}},\check{G}_{\mathcal{M}}}$ for a matrix zonotope. 
\end{definition}

A matrix zonotope can be over-approximated by an interval matrix $I_\mathcal{M} =\operatorname{int(}{\mathcal{M}}) =  [\underline{\mathcal{M}},\overline{\mathcal{M}}]$, where $\underline{\mathcal{M}} \in \R^{n_\mathcal{M} \times m_\mathcal{M}}$ and $\overline{\mathcal{M}} \in \R^{n_\mathcal{M} \times m_\mathcal{M}}$ denote its lower bound and upper bound, respectively. The Frobenius norm of the interval matrix $I_\mathcal{M}$ is defined as
$\FFnorm{I_\mathcal{M}} = \|{|I_{\mathcal{M}_c}|+\Delta}\|_F$ with $I_{\mathcal{M}_c} = 0.5(\overline{\mathcal{M}}+\underline{\mathcal{M}})$ and $\Delta = 0.5(\overline{\mathcal{M}}-\underline{\mathcal{M}})$ \cite{farhadsefat2011norms}.\par
By construction, any matrix zonotope (Definition \ref{def:matzonotopes}) can be put in one-to-one correspondence with a vector zonotope (Definition \ref{def:zonotopes}), via
\[
\mathcal{Z} = \{\operatorname{vec}(M)\mid M\in\mathcal{M}\}, 
\quad 
\mathcal{M} = \{\operatorname{unvec}(Z)\mid Z\in\mathcal{Z}\},
\]
with centers and generators related by
\[
c_{\mathcal{Z}} = \operatorname{vec}(C_{\mathcal{M}}), 
\quad g_{\mathcal{Z}_i} = \operatorname{vec}(G_{\mathcal{M}_i}), 
\quad \forall i = 1, \dots, \gamma_{\mathcal{M}}.
\]

\section{Robust Adaptive Recursive Parameter Estimation}\label{sec:param}
A robust adaptive parameter (model) estimation method is proposed for a linear regression model with noisy measurements and time-varying parameters. The approach utilizes a recursive, set-based method to iteratively estimate a time-varying set of parameters over consecutive time steps, which has been introduced in \cite{samada2023zonotopic} for a single-output time-invariant linear regression model. We extend it to the time-varying multi-output regression model. 
\subsection{Exponentially Forgetting Zonotopic Recursive Least Squares}
We consider a discrete-time system in regression form given by
\begin{align}\label{eq:regression}
\begin{split}
  y_k &= \varphi_k \theta_{\mathrm{tr},k} + v_k, \\
  \theta_{\mathrm{tr},k} &= \theta_{\mathrm{tr},k-1} + \delta \theta_{k-1}.
  \end{split}
\end{align}
where \( y_k \in \mathbb{R}^{p \times m} \) is the measured output, \( \varphi_k \in \mathbb{R}^{p \times n} \) is the regressor matrix at time step \(k\), and \( \theta_{\text{tr},k} \in \mathbb{R}^{n \times m} \) is the true time-varying (unknown) parameter matrix, and the term $\delta\theta_{k-1} \in \mathbb{R}^{n\times m}$ represents the change in the parameters. 
\begin{assumption}[\textbf{Bounded Noise}]\label{ass:noisebound}
    The additive measurement noise \( v_k \in \mathbb{R}^{p \times m} \) is bounded as \( \|v_k\|_{max} \le \sigma_v \), with \( \sigma_v \in \mathbb{R}_{\geq 0} \) a known scalar for all $k \geq 0$. To describe the uncertainty structure of the noise, we define the generator matrix \( G_v \in \mathbb{R}^{p \times pm} \) as \( G_v = \begin{bmatrix}
    Q_v^{(1)}, Q_v^{(2)}, \dots, Q_v^{(pm)} \end{bmatrix}  \), where each \( Q_v^{(l)} \in \mathbb{R}^{p \times m} \) contains exactly one nonzero element equal to \( \sigma_v \), and all other entries are zero. Specifically, for each \( l \in \{1, \dots, pm\} \), there exists a unique index pair \( (i, j) \) such that $(Q_v^{(l)})_{i,j} = {\sigma_v}, \quad \text{and} \quad (Q_v^{(l)})_{i_l,j_l} = 0 \quad \forall~ (i_l, j_l) \neq (i, j)$. Then, the noise satisfies \( v_k \in \zono{0, G_v} \) for all \(k\).
\end{assumption}

\begin{assumption}[\textbf{Bounded Parameter Variations}]\label{ass:parametebound}
    There exists ${\sigma}_\theta \geq 0$ such that, $\|{\delta\theta_{k-1}}\|_{max} \leq {\sigma}_\theta$. In addition to describe the uncertainty structure of parameter changing, we define the generator matrix \( G_{\theta} \in \mathbb{R}^{n \times nm} \) as \( G_{\theta} = \begin{bmatrix}
    G_{\theta}^{(1)}, G_{\theta}^{(2)}, \dots, G_{\theta}^{(nm)} \end{bmatrix} \), where each \( G_{\theta}^{(l)} \in \mathbb{R}^{n \times m} \) contains exactly one nonzero element equal to \( {{\sigma}_\theta} \), and all other entries are zero. Specifically, for each \( l \in \{1, \dots, nm\} \), there exists a unique index pair \( (i, j) \) such that $(G_{\theta}^{(l)})_{i,j} = {{\sigma}_\theta}, \quad \text{and} \quad (G_{\theta}^{(l)})_{i_l,j_l} = 0 \quad \forall~ (i_l, j_l) \neq (i, j)$. Then, the uncertainaty satisfies \( \delta_{{\theta}_{k-1}} \in \zono{0, G_{\theta}} \) for all \(k\).\par
\end{assumption}
To estimate unknown parameters, we need the regressor matrix sequence to be sufficiently rich. Consequently, we assume the sequence \(\{\varphi_k\}_{k=0}^\infty\) to be Persistently Exciting (PE) as defined in Definition~\ref{def:PE}.
\begin{definition}[\textbf{Persistently Exciting}]\label{def:PE}
The sequence $\{\varphi_k\},k=0,1,2,...\infty$ is said to be PE if for some constant $S\geq1$ and all $j$ there exist positive constants $\alpha$ and $\beta$ such that
\begin{align*}
    0<\alpha I_n \leq \sum_{i=j}^{j+S-1}\varphi^\top_i\varphi_i\leq\beta I_n<\infty.
\end{align*}
\end{definition}
Suppose the unknown true parameter matrix \(\theta_{\mathrm{tr},k}\) must be conservatively estimated. Under PE condition on \(\varphi_k\), the goal is to recursively compute a zonotope that over-approximates \(\theta_{\mathrm{tr},k}\), such that
\begin{equation}\label{eq:ZRLS_theta}
\theta_{\mathrm{tr},k+1} \in\zono{C_{k+1}, G_{\lambda_{k+1}}}.
\end{equation}
where $C_{k+1} \in \mathbb{R}^{n \times m}$ is the center matrix, and $G_{\lambda_{k+1}}^{(i)} \in \mathbb{R}^{n \times m}$ is the $i$-th generator matrix for all $i=1, \dots, r$. The number of generators is denoted by $r \in \mathbb{N}_{>0}$. We define recursive estimator as follows
\begin{align}\label{eq:thetaHatRecursion}
\theta_{k+1}= \theta_{k}+ K_k(y_{k}-\varphi_{k}\theta_{k} -v_k).
\end{align}
such that $\theta_{k+1}\in\R^{n\times m}$ provides the updated estimate when a new measurement is added. By multiplying the correction gain $ K_k\in\R^{n\times p}$  with the output estimation error $(y_{k}-\varphi_{k}\theta_{k} -v_k)$ , the previous estimation $\theta_{k}$ is updated at each sample time $k$. Inspired by \cite{alanwar2023distributed,le2013zonotope} of using set-propagation for designing diffusion observer and \cite{samada2023zonotopic} to parameter estimation of a regression model, we propose the Proposition \ref{pro:rze} to estimate a recursive set of unknown time-varying parameters in \eqref{eq:regression}.

\begin{proposition}[\textbf{Recursive Zonotopic Estimator}] \label{pro:rze}
Consider the deterministic system in~\eqref{eq:regression} with initial parameter set 
\(\theta_{0} \in \zono{C_{0}, G_0}\) where $\operatorname{rank}(G_0)\geq nm$ and choosing $K_k$ such that $\operatorname{rank}(I_n - K_k \varphi_k)=n$.
The EF-ZRLS is updated recursively by \eqref{eq:ZRLS_theta} for all $k=0,1,\dots\infty$ as follows
\begin{align}
C_{k+1} &= (I_n - K_k \varphi_k) C_k + K_k y_k, \label{eq:C_update}
\end{align}
\begin{subequations}\label{eq:R_update}
\begin{align}
G^{(i)}_{\lambda_{k+1}} &= \lambda^{-\frac12}(I_n - K_k \varphi_k)\, G^{(i)}_{\lambda_k}, \label{eq:R_update_a}\\
G_{\lambda_{k+1}} &= 
\begin{bmatrix}
G_{\lambda_{k+1}}^{(1)}, \dots, G_{\lambda_{k+1}}^{(N_G)}, G_{v_k}
\end{bmatrix},~\forall i=0,1,\dots,N_G. \label{eq:R_update_b}
\end{align}
\end{subequations}
where $N_G\in\N$ is the number of generators, \(\lambda \in (0,1]\) is the forgetting factor, and for all $p, m \in \mathbb{N}_{>0}$
\begin{align*}
    &G_{v_k} = \begin{bmatrix}
    -K_k Q_{v}^{(1)},\; -K_k Q_{v}^{(2)},\; \dots,\; -K_k Q_{v}^{(pm)} \end{bmatrix}.
\end{align*}
\end{proposition}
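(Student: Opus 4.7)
The plan is to prove the set containment \eqref{eq:ZRLS_theta} by induction on $k$, with the base case being immediate from the initialization hypothesis $\theta_{\mathrm{tr},0}\in\zono{C_0,G_0}$.

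For the inductive step, I would start by writing out the zonotope expansions $\theta_{\mathrm{tr},k}=C_k+\sum_{i=1}^{N_G}\beta^{(i)}G^{(i)}_{\lambda_k}$ with $|\beta^{(i)}|\le 1$ (inductive hypothesis) and $v_k=\sum_{l=1}^{pm}\alpha^{(l)}Q_v^{(l)}$ with $|\alpha^{(l)}|\le 1$ (Assumption~\ref{ass:noisebound}). Substituting the measurement $y_k=\varphi_k\theta_{\mathrm{tr},k}+v_k$ into the center update \eqref{eq:C_update} gives $C_{k+1}=\theta_{\mathrm{tr},k}-(I_n-K_k\varphi_k)(\theta_{\mathrm{tr},k}-C_k)+K_k v_k$, and combining with the parameter dynamics $\theta_{\mathrm{tr},k+1}=\theta_{\mathrm{tr},k}+\delta\theta_k$ yields the key identity
\begin{align*}
\theta_{\mathrm{tr},k+1}-C_{k+1}=(I_n-K_k\varphi_k)(\theta_{\mathrm{tr},k}-C_k)-K_k v_k+\delta\theta_k.
\end{align*}
Plugging the zonotope expansions into the right-hand side and rearranging through the generator recursion \eqref{eq:R_update_a} turns the first two contributions into $\sum_i(\beta^{(i)}\sqrt{\lambda})G^{(i)}_{\lambda_{k+1}}+\sum_l\alpha^{(l)}(-K_k Q_v^{(l)})$. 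Because $\lambda\in(0,1]$ ensures $|\beta^{(i)}\sqrt{\lambda}|\le 1$, and because the extended generator list \eqref{eq:R_update_b} contains exactly the matrices $G^{(i)}_{\lambda_{k+1}}$ and $-K_k Q_v^{(l)}$, every contribution other than $\delta\theta_k$ already lies in $\zono{0,G_{\lambda_{k+1}}}$.

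The main obstacle is absorbing the drift term $\delta\theta_k$, which does not appear as an explicit generator in \eqref{eq:R_update_b}. I expect to exploit the inflation $\lambda^{-1/2}\ge 1$ baked into \eqref{eq:R_update_a}: the representation above uses only coefficients of magnitude at most $\sqrt{\lambda}\le 1$, leaving a radial slack of at least $(1-\sqrt{\lambda})$ available in each generator direction, into which the bounded perturbation $\|\delta\theta_k\|_{max}\le\sigma_\theta$ from Assumption~\ref{ass:parametebound} can be folded provided $\lambda$ is tuned consistently with $\sigma_\theta$. The rank conditions on $G_0$ and $I_n-K_k\varphi_k$ are what guarantee that the generator matrix stays full-rank throughout the recursion, so the slack directions continue to span the ambient space $\mathbb{R}^{n\times m}$ and the absorption step remains well-posed at every iteration. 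Induction on $k$ then closes the argument.
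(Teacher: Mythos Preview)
Your proposal is correct and follows essentially the same route as the paper: derive the error recursion, use the full-rank hypotheses so that the drift $\delta\theta_k$ lies in the span of the current generators, and then absorb it into the $\lambda^{-1/2}$ slack. The paper carries out that absorption step explicitly---writing $G_\theta^{(i)}=\sum_j\eta_{ij}G_k^{(j)}$ and obtaining the concrete sufficient condition $\lambda\le 1/(1+\sum_j|\eta_{ij}|)^2$---which is exactly the quantitative form of your phrase ``provided $\lambda$ is tuned consistently with $\sigma_\theta$.''
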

%\(\bar{R}_{\lambda_k} = \downarrow_q(R_{\lambda_k})\) denotes the reduced generator matrix according to the reduction operator~(Definition~\ref{cor:matrix-reduction}), ensuring  
\begin{proof}
We can rewrite the \eqref{eq:thetaHatRecursion} as follows
\begin{align}\label{eq:thetaHatRecursion2}
\theta_{k+1}=(I_n - K_k \varphi_k)\theta_{k}+ K_ky_{k}+ K_kv_k.
\end{align}
Considering \eqref{eq:regression}, the estimated parameter contained the true value and its deviation as $\theta_k$. Given ${\theta}_k \in \zono{C_{k},G_{k}}$ and according to \eqref{eq:thetaHatRecursion2} 
\begin{align*}
\theta_{k+1} \in\, & 
\left( (I_n - K_k \varphi_k) \odot 
\zono{C_{k}, G_k } \right) 
\\ & 
\oplus \left( (I_n - K_k \varphi_k) \odot 
\zono{\textbf{0}, G_{\theta} } \right)  
\\ & 
\oplus \left( K_k \odot \zono{y_k, \textbf{0}} \right) 
\oplus \left( -K_k \odot \zono{\textbf{0}, G_v} \right).
\end{align*}
 By applying the linear map and Minkowski sum properties of matrix zonotopes, the above equation leads to the center update equation \eqref{eq:C_update} and the generator update equation as follows
 \begin{align*}
     G_{k+1} = &  \begin{bmatrix}
         (I_n - K_k \varphi_k) G_k, (I_n - K_k \varphi_k)G_{\theta},G_{v_K}\end{bmatrix}.
\end{align*}
Given the matrix zonotope Definition \ref{def:matzonotopes}, we can write   
\begin{align}\label{eq:M1zon1}
    \zono{C_{k+1},G_{k+1}}&=C_{k+1} + \sum_{j=1}^{N_G} \alpha_{j} (I_n - K_k \varphi_k)G^{(j)}_{k}\\\nonumber 
    &\quad + \sum_{i=1}^{nm} \beta_{i} (I_n - K_k \varphi_k)G^{(i)}_{\theta}+ \sum_{j=1}^{mp} \xi_{j} G^{(j)}_{v_k}, \\\nonumber 
    &\quad 
      \; \left|\alpha_{j}\right| \leq 1,\;
    \left|\beta_{i}\right| \leq 1,\;
    \left|\xi_{j}\right| \leq 1.
\end{align}
With regards to the definition of the proposition, we have $G^{(i)}_{\theta}\in \operatorname{span}(G_k)$  which means $G^{(i)}_{\theta} = \sum_{j=1}^{N_G} \eta_{ij}G^{(j)}_k$ for all $i=1,2,...,nm$ and $\eta_{ij}>0$. Then substituting it into \eqref{eq:M1zon1}, we have 
\begin{align}\label{eq:updateZonop1}
     \zono{C_{k+1},G_{k+1}}&=C_{k+1}+ \sum_{j=1}^{mp} \xi_{j} G^{(j)}_{v_k}\\\nonumber
     &\quad +\sum_{j=1}^{N_G}(I_n - K_k \varphi_k) (\alpha_{j} +\beta_{j} \sum_{j=1}^{nm}\eta_{ij})G^{(j)}_k,\\\nonumber
    &\quad 
      \;\left|\alpha_{j}\right| \leq 1,\;
    \left|\beta_{i}\right| \leq 1,\;
    \left|\xi_{j}\right| \leq 1.
\end{align}
If $\lambda^{\frac{-1}{2}} \gamma_{j}=\alpha_{j} + \beta_{j} \sum_{j=1}^{nm}\eta_{ij}$ for any $\lambda>0$. Then, using the triangle inequality, we have 
\begin{align*}
      \lambda^{\frac{-1}{2}} \left|\gamma_{j}\right|\leq\left|\alpha_{j}\right| + \left|\beta_{j}\right|
      \sum_{j=1}^{nm}\left|\eta_{ij}\right| . 
\end{align*}
Using the fact that $\left|\alpha_{j}\right| \leq 1$ and $\left|\beta_{j}\right| \leq 1$, we have 
\begin{align*}
       \left|\gamma_{j}\right|\leq (1 + 
      \sum_{j=1}^{nm}\left|\eta_{ij}\right|)\lambda^{\frac{1}{2}} . 
\end{align*}
Then by choosing 
\begin{align}\label{eq:lamda}
       0<\lambda\leq \frac{1}{(1 + 
      \sum_{j=1}^{nm}\left|\eta_{ij}\right|)^2}\leq1 . 
\end{align}
leads to $ \left|\gamma_{j}\right| \leq 1$. Therefore we can rewrite \eqref{eq:M1zon1} as follows
\begin{align}\label{eq:updateZonop1}
     \zono{C_{k+1},G_{\lambda_{k+1}}}&=C_{k+1} + \sum_{j=1}^{N_G}\lambda^{\frac{-1}{2}}(I_n - K_k \varphi_k) \gamma_{j} G^{(j)}_{k}\\\nonumber
     &\quad + \sum_{j=1}^{mp} \xi_{j} G^{(j)}_{v_k},
    \quad \left|\gamma_{j}\right| \leq 1.
\end{align}
Using matrix zonotope Definition \ref{def:matzonotopes} and zonotope containment condition in \cite[Theorem 3]{sadraddini2019linear} leads to $\zono{C_{k+1},G_{k+1}}\subseteq\zono{C_{k+1},G_{\lambda_{k+1}}}$. 
Therefore, the generator update \eqref{eq:updateZonop1} is written as follows 
\begin{align*}
     G^{(i)}_{{\lambda}_{k+1}} = &  \begin{matrix} [\lambda^{\frac{-1}{2}}(I_n - K_k \varphi_k) G^{(i)}_{{\lambda}_k}, G_{v_k}]\end{matrix}.
\end{align*}
\end{proof}
While in the Proposition \ref{pro:rze}, we proposed the center and generators update equations to find a set containing true parameters $\theta_{\mathrm{tr},k}$, the correction gain $K_k$ is still unknown in \eqref{eq:C_update}
 and \eqref{eq:R_update_a}. Theorem \ref{th:optimalgain} proposes an optimal gain that minimizes the estimated parameter set.
\begin{theorem}[\textbf{Optimal Estimator Gain}]\label{th:optimalgain}
Given the generator matrix \( G_{{\lambda}_{k+1}} \) as in \eqref{eq:R_update}, and under PE condition of $\varphi_k$ as specified in Definition~\ref{def:PE}, the optimal estimator-gain \( K^{\star}_{k} = \arg\min_{K_k} \|G_{{\lambda}_{k+1}}\|^2_{F,W} \) that minimizes the trace of the covariance matrix \( J_W = \text{Tr}(W P_{k+1}) \), is computed by
\begin{align}
K^{\star}_{k} &= {P}_k \varphi^\top_k \Lambda^{-1}, \label{eq:Correctiongain} \\
P_{k+1} &= \lambda^{-1}(I_n - K^{\star}_{k} \varphi_k) {P}_k \label{eq:Cov2}
\end{align}
where
\[
 \quad \Lambda = \varphi_k {P}_k \varphi^\top_k + \lambda Q,
\]
with \( Q = m\sigma^2_v\mathbf{1}
\in\R^{p\times p} \), and \( K^{\star}_{k}\) is independent of the weighting matrix \( W\in\R^{n\times n} \).
\end{theorem}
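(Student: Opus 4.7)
The plan is to identify a surrogate covariance matrix $P_{k+1}$ with the outer product of the generator matrix and then to solve a standard matrix-valued quadratic program in $K_k$, in the spirit of the Kalman filter derivation but with the forgetting factor $\lambda$ inserted into the propagation term.

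First, I would use the identity $\|G\|_{F,W}^2 = \text{Tr}(W G G^\top)$ to rewrite the objective as $J_W = \text{Tr}(W P_{k+1})$, where $P_{k+1} := G_{\lambda_{k+1}} G_{\lambda_{k+1}}^\top$. Substituting the generator recursion \eqref{eq:R_update_a}–\eqref{eq:R_update_b} and using the fact that the noise generator blocks satisfy
\begin{equation*}
\sum_{l=1}^{pm} Q_v^{(l)} (Q_v^{(l)})^\top = m\sigma_v^2 I_p =: Q ,
\end{equation*}
(because each $Q_v^{(l)}$ carries exactly one $\sigma_v$ that cycles through every $(i,j)$ entry of a $p\times m$ matrix, producing a single $\sigma_v^2$ on the $(i,i)$-diagonal of the outer product), the contribution of the noise block becomes $K_k Q K_k^\top$ and I obtain the Joseph-form
\begin{equation*}
P_{k+1} = \lambda^{-1}(I_n - K_k\varphi_k)\, P_k\,(I_n - K_k\varphi_k)^\top + K_k Q K_k^\top .
\end{equation*}

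Next, I would differentiate $J_W$ with respect to $K_k$ using the matrix-calculus identities $\partial_{K}\text{Tr}\bigl(W(I-K\varphi)P(I-K\varphi)^\top\bigr) = -2W(I-K\varphi)P\varphi^\top$ and $\partial_{K}\text{Tr}(W K Q K^\top) = 2 W K Q$, exploiting symmetry of $W$ and $P_k$. Setting the gradient to zero and cancelling the common positive-definite factor $W$ yields the normal equation
\begin{equation*}
K_k\bigl(\varphi_k P_k \varphi_k^\top + \lambda Q\bigr) = P_k \varphi_k^\top .
\end{equation*}
The PE condition (Definition~\ref{def:PE}) together with the rank condition on $G_0$ ensures that $P_k \succ 0$ and therefore $\Lambda = \varphi_k P_k \varphi_k^\top + \lambda Q \succ 0$, so inversion yields $K^\star_k = P_k \varphi_k^\top \Lambda^{-1}$. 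Because $W$ cancels from both sides before inversion, $K^\star_k$ is independent of the weighting, proving the $W$-independence claim.

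Finally, to reduce the Joseph form to the compact update \eqref{eq:Cov2}, I would substitute $K^\star_k$ back and exploit the algebraic identity that drops out of the normal equation, namely $\lambda K^\star_k Q = (I_n - K^\star_k \varphi_k) P_k \varphi_k^\top$. This identity precisely cancels the cross-term and the quadratic-in-$K^\star_k$ term, leaving $P_{k+1} = \lambda^{-1}(I_n - K^\star_k \varphi_k) P_k$. The main obstacle I expect is the careful matrix-calculus bookkeeping together with the Joseph-to-simplified-form collapse; a secondary subtlety is to verify that the PE assumption and $\operatorname{rank}(G_0)\ge nm$ inductively guarantee $P_k \succ 0$ for all $k$, so that the inverse $\Lambda^{-1}$ is well-defined at every step.
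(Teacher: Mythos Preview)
Your proposal is correct and follows essentially the same route as the paper: form $P_{k+1}=G_{\lambda_{k+1}}G_{\lambda_{k+1}}^\top$, obtain the Joseph-type recursion, differentiate $\text{Tr}(WP_{k+1})$ to get the normal equation $K_k\Lambda=P_k\varphi_k^\top$, and then collapse the Joseph form to $P_{k+1}=\lambda^{-1}(I_n-K_k^\star\varphi_k)P_k$. The only cosmetic differences are that the paper first expands $J_W$ into $\text{Tr}(\lambda^{-1}WP_k-2\lambda^{-1}WP_k\varphi_k^\top K_k^\top+\lambda^{-1}WK_k\Lambda K_k^\top)$ before differentiating, and performs the Joseph-to-compact reduction by direct substitution of $K_k^\star=P_k\varphi_k^\top\Lambda^{-1}$ rather than via your normal-equation identity $\lambda K_k^\star Q=(I_n-K_k^\star\varphi_k)P_k\varphi_k^\top$; both arrive at the same result, and your explicit computation $\sum_l Q_v^{(l)}(Q_v^{(l)})^\top=m\sigma_v^2 I_p$ matches the paper's $Q$.
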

\begin{proof}
 Given the covariance Definition \ref{de:CovMatrixZono} , we substitute \eqref{eq:R_update} into the covariance matrix $P_{k+1}= G_{{\lambda}_{k+1}} G^{\top}_{{\lambda}_{k+1}} $ as follows
\begin{align}\label{eq:cov0}
P_{k+1} &= \lambda^{-1}(I_n - K_k \varphi_k) G_{{\lambda}_{k}} G^{\top}_{{\lambda}_{k}}
(I_n - K_k \varphi_k)^\top \\\nonumber
&\quad + K_k\sum_{j=1}^{pm} Q_{v}^{(j)} Q_{v}^{(j)^\top} K^\top_k .
\end{align}
Given $P_k=G_{{\lambda}_{k}} G^{\top}_{{\lambda}_{k}}$ and define \( Q =\sum_{j=1}^{pm} Q_{v}^{(j)} Q_{v}^{(j)^\top}\), \eqref{eq:cov0} is written as follows
\begin{equation}\label{eq:Cov1}
P_{k+1} = \lambda^{-1}(I_n - K_k \varphi_k) {P}_k (I_n - K_k \varphi_k)^\top  + K_kQK^\top_k.
\end{equation}
With replacing $P_{k+1}$ in $J_W$, it derives in
\begin{equation}
J_W = \text{Tr}(\lambda^{-1}W(I_n - K_k \varphi_k) {P}_k (I_n - K_k \varphi_k)^\top + W K_k Q K^\top_k).
\end{equation}
Afterwards, performing appropriate manipulations, and thanks to the symmetry of $P_k$ and $W$ is obtained
\begin{equation}
J_W = \text{Tr}(\lambda^{-1}W{P}_k - 2\lambda^{-1}W{P}_k\varphi^\top_k K^\top_k + \lambda^{-1}WK_k \Lambda K^\top_k),
\end{equation}
with $\Lambda = \varphi_k{P}_k\varphi^\top_k + \lambda Q$. Finally, determining $\frac{\partial J_W}{\partial K_k} = 0$ through Property \ref{pr:matrixdrevative} and solving for $K_k$ it results in $2\lambda^{-1}WP_k\varphi^\top_k = 2\lambda^{-1}WK_k\Lambda $ which yields to \eqref{eq:Correctiongain}\par 
Then, substituting \eqref{eq:Correctiongain} into \eqref{eq:Cov1} and do some manipulations, we have
\begin{align*}
P_{k+1} =\ & \lambda^{-1}{P}_k - \lambda^{-1}{P}_k \varphi^\top_k \Lambda^{-1} \varphi_k {P}_k \notag \\
& + \lambda^{-1}{P}_k \varphi^\top_k \Lambda^{-1} \Lambda \Lambda^{-1}  \varphi_k{P}^\top_k.
\end{align*}
Then, applying $\Lambda^{-1} \Lambda = I_p$, using the symmetric property of ${P}_k$ and making some manipulations, the following expression is achieved:
\begin{equation}
P_{k+1} = \lambda^{-1}(I_n - {P}_k \varphi^\top_k \Lambda^{-1} \varphi_k) {P}_k.
\end{equation}
Then replacing  ${P}_k \varphi_k^\top\Lambda^{-1}$ by $K^{\star}_{k}$ it yields to \eqref{eq:Cov2}.
\end{proof}
Through the Corollary \ref{co:Krank}, the optimal correction gain \eqref{eq:Correctiongain} satisfies the definition conditions in Proposition \ref{pro:rze}.
\begin{corollary}\label{co:Krank}
    Given PE condition of $\varphi_k$ as specified in Definition~\eqref{def:PE}, using correction gain \eqref{eq:Correctiongain} and covariance update matrix \eqref{eq:Cov2}, then $\operatorname{rank}(I_n - K_k \varphi_k)=n$.
\end{corollary}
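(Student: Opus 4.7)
The plan is to bypass any direct structural analysis of $I_n - K_k \varphi_k$ and instead exploit the covariance recursion \eqref{eq:Cov2}. Rewriting it as
\begin{equation*}
(I_n - K^{\star}_k \varphi_k)\, P_k \;=\; \lambda\, P_{k+1},
\end{equation*}
we see that if $P_k$ and $P_{k+1}$ are both invertible, then $I_n - K^{\star}_k \varphi_k = \lambda\, P_{k+1} P_k^{-1}$ is a product of invertible matrices and hence has full rank $n$. So the entire task reduces to showing that the covariance sequence $P_k = G_{\lambda_k} G_{\lambda_k}^\top$ stays positive definite for every $k$.

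I would establish this by induction. For the base case, the proposition hypothesis $\operatorname{rank}(G_0) \geq nm \geq n$ means the rows of $G_0$ span $\R^n$, so $P_0 = G_0 G_0^\top \succ 0$. For the inductive step, assume $P_k \succ 0$. Since Assumption~\ref{ass:noisebound} gives $\sigma_v > 0$, the matrix $Q$ constructed in the proof of Theorem~\ref{th:optimalgain} evaluates to $m\sigma_v^2 I_p \succ 0$, so $\lambda Q$ is positive definite and $\Lambda = \varphi_k P_k \varphi_k^\top + \lambda Q \succ 0$ is invertible. Substituting $K^{\star}_k = P_k \varphi_k^\top \Lambda^{-1}$ into \eqref{eq:Cov1} and applying the Woodbury matrix identity yields
\begin{equation*}
(I_n - K^{\star}_k \varphi_k) P_k \;=\; \bigl(P_k^{-1} + \lambda^{-1}\varphi_k^\top Q^{-1}\varphi_k\bigr)^{-1}.
\end{equation*}
The right-hand side is the inverse of a sum of a positive definite matrix and a positive semidefinite one, and is therefore itself positive definite. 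Consequently $P_{k+1} = \lambda^{-1}(I_n - K^{\star}_k \varphi_k)P_k \succ 0$, closing the induction.

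With $P_k, P_{k+1} \succ 0$, the identity $I_n - K^{\star}_k \varphi_k = \lambda P_{k+1} P_k^{-1}$ immediately delivers $\operatorname{rank}(I_n - K^{\star}_k \varphi_k) = n$. Notice that PE of $\varphi_k$ in Definition~\ref{def:PE} is not required for this single-step claim; what the proof actually exploits is the strict positive definiteness of $\lambda Q$ (regularizing the update) together with an initially nondegenerate covariance. The PE assumption ensures that, in addition, the set contracts rather than drifts — relevant to later convergence results — but the rank statement here is guaranteed by the Woodbury-style expression alone.

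The main obstacle I anticipate is purely bookkeeping. First, the initial rank condition $\operatorname{rank}(G_0)\geq nm$ needs to be interpreted consistently so that $P_0 \succ 0$ genuinely follows (one must be careful whether $G_0$ is viewed as the $n\times(N_G m)$ concatenated generator or as its vectorized counterpart). Second, the Woodbury identity requires $Q$ to be invertible, which is not obvious from the compact notation $Q = m\sigma_v^2 \mathbf{1}$ in Theorem~\ref{th:optimalgain}; one has to trace back to $Q = \sum_{j=1}^{pm} Q_v^{(j)} Q_v^{(j)\top}$ from Assumption~\ref{ass:noisebound} and verify that this sum actually equals $m\sigma_v^2 I_p$. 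Once these two points are confirmed, the rank conclusion follows in one line.
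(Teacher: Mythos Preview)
Your approach is essentially the paper's: both rewrite \eqref{eq:Cov2} as $I_n - K^{\star}_k\varphi_k = \lambda P_{k+1}P_k^{-1}$ and conclude full rank from invertibility of $P_k$ and $P_{k+1}$. The paper stops there, simply asserting that ``$P_k$ is a positive definite matrix and it is of full rank under the PE condition''; you go further and actually \emph{prove} that $P_k\succ 0$ for every $k$ by induction via the Woodbury identity, using only $P_0\succ 0$ and $Q\succ 0$. That is a genuine strengthening: as you correctly observe, the per-step rank claim does not need PE at all, whereas the paper's one-line justification leans on PE without explaining how. Your version is more self-contained and pins down exactly which hypotheses do the work. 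The two caveats you flag---interpreting $\operatorname{rank}(G_0)\ge nm$ so that $P_0\succ 0$ follows, and verifying $Q=\sum_j Q_v^{(j)}Q_v^{(j)\top}=m\sigma_v^2 I_p$ rather than the paper's ambiguous $m\sigma_v^2\mathbf{1}$---are real bookkeeping points, and note that Assumption~\ref{ass:noisebound} allows $\sigma_v\ge 0$, so strictly you need $\sigma_v>0$ for $Q\succ 0$.
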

\begin{proof}
    Given the covariance update matrix \ref{eq:Cov2}, we can write $(I_n - K^{\star}_{k} \varphi_k) = \lambda P_{k+1} {P}^{-1}_k$.
Since the $P_k$ is a positive definite matrix and it is of full rank under the PE condition, $(I_n - K^{\star}_{k} \varphi_k)$ is of full rank. 
\end{proof}
\begin{remark}
The $\lambda$ in \eqref{eq:R_update_a} scales up the generators to cover the uncertainties and serves as the forgetting factor that exponentially decreases the influence of past data in the covariance update in \eqref{eq:Cov2}. When there is no uncertainty in parameter $G_{\theta}=\textbf{0}$, choosing $\lambda=1$, which means the generators are updated only by the noise effect, and all data have the same weight in updating the covariance matrix. However, when $G_{\theta}\neq\textbf{0}$, the forgetting factor $\lambda$ is chosen less than $1$, which means the generators should be scaled up to cover these uncertainties in computing the boundary of the parameter set, prioritizing recent measurements in computing the covariance matrix.  
\end{remark}

\begin{remark}
The convergence of the updating center \eqref{eq:C_update} using the optimal gain \eqref{eq:Correctiongain} has been studied in \cite{bruggemann2021exponential}. It has been established that under the PE condition of $\varphi_k$, the parameter estimation error $\tilde{\theta}_k = \theta_k - \theta_{\text{tr},k}$ globally uniformly and ultimately converges to a bounded set.\par
Additionally, Proposition \ref{pro:rze} proved that for a sufficiently small $\lambda$, the updated zonotope in \eqref{eq:ZRLS_theta} can cover all deviations in time-varying parameters. However, if $\lambda$ is chosen too small, the covariance matrix update in \eqref{eq:Cov2} may cause the covariance matrix to diverge. Therefore, it is recommended in the literature to choose $\lambda$ in the range of $0.95$ to $1$ for normal recursive least squares with forgetting factor. With this selection, a dedicated method is required to estimate the system parameters, which are shown to be slow time-varying (e.g., with small $\delta_{\theta}$).
\end{remark}

\begin{remark}
    The most straightforward way to initialize the center in \eqref{eq:C_update} is to initialize the $C_0=\textbf{0}$ and initialize the initial covariance matrix in \eqref{eq:Cov2} with a large enough identity matrix $P_0=\tau I_n$ for $\tau\in \R_{>0}$ which means the initial center $C_0$ is not so accurate. On the other hand, in case of availability of collecting offline data, first collect at least $N_0$ snapshots (more precisely, enough snapshots so that $\Phi = [\varphi_0,\varphi_1,...,\varphi_{N_0}]^\top$ has rank $n+m$ and $Y = [y_1,y_2,...,y_{N_0}]^\top$), and then compute the initial $C_0=\Phi^\dagger Y$ and $P_0=\Phi^\dagger$ or $P_0=G_0G^\top_0$.
\end{remark}

\begin{remark}
Without loss of generality, to estimate the unknown parameters in \eqref{eq:regression}, we can use the vector-permutation form of this regression as
\begin{align}\label{eq:regressionvector}
\begin{split}
  \bar{y}_k &= \bar{\varphi}_k \bar{\theta}_{\mathrm{tr},k} + \bar{v}_k, \\
  \bar{\theta}_{\mathrm{tr},k} &= \bar{\theta}_{\mathrm{tr},k-1} + \delta \bar{\theta}_{k-1}.
  \end{split}
\end{align}
Where $\bar{y}_k=\text{vec}({y}_k)$ , $\bar{\theta}_{\mathrm{tr},k}=\text{vec}(\bar{\theta}_{\mathrm{tr},k})$, $\bar{v}_k=\text{vec}({v}_k)$, and $\bar{\varphi}_k=I_m\otimes \varphi_k$. Then all matrix zonotopes will be converted to the (vector) zonotope, and the center update equation in \eqref{eq:C_update} will be a vector $\bar{C}_k=\text{vec}(C_k)\in \R^{nm}$ and following the dimension of covariance matrix in \eqref{eq:Cov2} increased to $\bar{P}_k\in \R^{(nm\times nm)}$. Given the study in \cite{lai2024efficient} using vector-permutation makes more computational cost since the Kronecker product creates numerous zero terms in the regressor, increasing computational cost. In addition, using vector-permutation is beneficial when the noise distribution is different for each output, while in our proposed method, no distribution is considered for noise. 
   
\end{remark}
\begin{remark}
    Using the generator updating equation in \eqref{eq:R_update_b} can result in a large generator matrix after several time steps, potentially causing computational issues. To reduce this, one can employ the reduced-order generator $\downarrow_q (G_{\lambda_{k+1}})$ as described in Proposition \ref{cor:matrix-reduction}.
\end{remark}

\section{Data-Driven Reachability Analysis}\label{sec:RLSReachability}
In this section, we compute over-approximations of discrete-time LTV and nonlinear Lipschitz dynamic systems over a predefined time horizon, without assuming any prior knowledge of the underlying dynamic models. The available data are assumed to be corrupted by process noise. 
\subsection{Linear Time Varying Systems with Process Noise}\label{sec:reachlin_processnoise}
Consider $f(x_k,u_k)$ in \eqref{eq:sysnonlingen} is a discrete-time linear time-varying function. Then, the system \eqref{eq:sysnonlingen} can be written as follows
\begin{equation} \label{eq:sys}
\begin{aligned}
x_{k+1} &= A_{\text{tr},k} \, x_k + B_{\text{tr},k} \, u_k + w_k, \\
A_{\text{tr},k} &= A_{\text{tr},k-1} + \delta A_{k-1}, \\
B_{\text{tr},k} &= B_{\text{tr},k-1} + \delta B_{k-1}.
\end{aligned}
\end{equation}
where $A_{\text{tr},k} \in \mathbb{R}^{n_x \times n_x}$, and $B_{\text{tr},k} \in \mathbb{R}^{n_x \times n_u}$ are the true unknown time-varying system model, and $\delta A_{k} \in \mathbb{R}^{n_x \times n_x}$, and $\delta B_{k} \in \mathbb{R}^{n_x \times n_u}$ are change in the system model which $\bigg\|{\begin{bmatrix}  \delta A_{k} & \delta B_{k} \end{bmatrix}}\bigg\|_{max}\le \sigma_{AB} $ and the noise bounded by $\|w_k\|_{max}\leq\sigma_w$ for $k=0,1,...,\infty$.\par 

We define the unknown bounded noise set with Assumption \ref{ass:zon-noise}.
\begin{assumption}[\textbf{Noise Zonotope}]\label{ass:zon-noise}
The noise $w_k$ is contained in a known zonotope for all $k \in \mathbb{Z}_{\geq 0}$, 
i.e., $w_k \in \mathcal{Z}_w = \langle c_{\mathcal{Z}_w}, G_{\mathcal{Z}_w} \rangle$, 
where $c_{\mathcal{Z}_w} \in \mathbb{R}^{n_{\mathcal{Z}_w}}$ is the center, 
$G_{\mathcal{Z}_w} \in \mathbb{R}^{n_{\mathcal{Z}_w} \times \gamma_{\mathcal{Z}_w}}$ is the generator matrix, 
and the zonotope $\mathcal{Z}_w$ contains the origin.
\end{assumption}
To guarantee an over-approximation of the reachable sets for the unknown system in \eqref{eq:sys}, we estimate a time-varying set of models $\begin{bmatrix}A_k & B_k \end{bmatrix}$ which contains the true system model. By applying the Proposition \ref{pro:rze} and Theorem~\ref{th:optimalgain}, we estimate this time-varying set of models. Let $\hat{\mathcal{R}}_{k}$ denotes the reachable set computed based on the noisy data using matrix zonotopes. We propose Algorithm~\ref{alg:LTIreach1} to compute $\hat{\mathcal{R}}_{k}$ as an over-approximation of the exact reachable set $\mathcal{R}_{k}$. The following theorem proves that $\hat{\mathcal{R}}_{k} \supseteq \mathcal{R}_{k}$. 
\begin{theorem}[\scriptsize{\textbf{Reachability Analysis of the LTV Systems}}]
\label{th:reach_lin}
Given a minimum enough dense data on the set $D$ with length $N_D$ which the PE condition in Definition \ref{def:PE} is satisfied, and choosing a large enough $G_{{\lambda}_0}$ such that $[A_{\text{tr},0},B_{\text{tr},0}]\in \tilde{\mathcal{M}}_0=\zono{[A_0,B_0],G_{{\lambda}_0}}$ where $[A_0,B_0]$ is initial model estimation, then the reachable sets computed in Algorithm~\ref{alg:LTIreach1} over-approximates the exact reachable sets, i.e., $\hat{\mathcal{R}}_{k} \supseteq \mathcal{R}_{k}$ for $k=0,1,\dots N-1$.
\end{theorem}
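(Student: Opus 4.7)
The plan is to proceed by induction on the time index $k$, combining two independent guarantees: (i) the recursively updated matrix zonotope $\tilde{\mathcal{M}}_k$ always contains the true model $[A_{\text{tr},k},B_{\text{tr},k}]$, which follows from Proposition~\ref{pro:rze} applied to the regression form $x_{k+1}=[A_{\text{tr},k},B_{\text{tr},k}][x_k^\top,u_k^\top]^\top+w_k$ of \eqref{eq:sys}; and (ii) forward set-propagation through $\tilde{\mathcal{M}}_k$ of an over-approximating state set yields an over-approximating successor state set, using the standard matrix-zonotope/zonotope product and Minkowski sum. The first guarantee rests on the hypothesis of a large enough $G_{\lambda_0}$ so that $[A_{\text{tr},0},B_{\text{tr},0}]\in\tilde{\mathcal{M}}_0$, combined with the PE condition, which ensures the correction gain of Theorem~\ref{th:optimalgain} is well-defined and $\operatorname{rank}(I_n - K_k\varphi_k) = n$ at every step via Corollary~\ref{co:Krank}.

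For the base case, I would observe that by the initialization of Algorithm~\ref{alg:LTIreach1}, $\hat{\mathcal{R}}_0=\mathcal{X}_0=\mathcal{R}_0$, so containment is trivial. For the inductive step, assume $\mathcal{R}_k\subseteq\hat{\mathcal{R}}_k$. Any $x_{k+1}\in\mathcal{R}_{k+1}$ has the form
\begin{equation*}
x_{k+1}=[A_{\text{tr},k},B_{\text{tr},k}]\begin{bmatrix}x_k\\u_k\end{bmatrix}+w_k
\end{equation*}
with $x_k\in\mathcal{R}_k\subseteq\hat{\mathcal{R}}_k$, $u_k\in\mathcal{U}_k$, and $w_k\in\mathcal{Z}_w$. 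By the induction hypothesis $[x_k^\top,u_k^\top]^\top\in\hat{\mathcal{R}}_k\times\mathcal{U}_k$, while by guarantee (i), $[A_{\text{tr},k},B_{\text{tr},k}]\in\tilde{\mathcal{M}}_k$. Since Algorithm~\ref{alg:LTIreach1} computes $\hat{\mathcal{R}}_{k+1}$ as an over-approximation of $\tilde{\mathcal{M}}_k(\hat{\mathcal{R}}_k\times\mathcal{U}_k)\oplus\mathcal{Z}_w$, the point $x_{k+1}$ lies in $\hat{\mathcal{R}}_{k+1}$, closing the induction.

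The main obstacle is the matrix-zonotope/zonotope product $\tilde{\mathcal{M}}_k(\hat{\mathcal{R}}_k\times\mathcal{U}_k)$: this set is not in general a zonotope and must be over-approximated, typically by treating every pairing of a matrix-generator scalar with a state/input-generator scalar as an independent generator. One must verify that every realization $Mz$ with $M\in\tilde{\mathcal{M}}_k$ and $z\in\hat{\mathcal{R}}_k\times\mathcal{U}_k$ is contained in the computed over-approximation, which mirrors the construction used in \cite{alanwar2023data} and can be invoked directly rather than rederived. A secondary bookkeeping point is confirming that the span condition $G_\theta^{(i)}\in\operatorname{span}(G_{\lambda_k})$ used inside Proposition~\ref{pro:rze} is inherited from the large-enough $G_{\lambda_0}$ assumption together with the generator update \eqref{eq:R_update_b}, so that set containment of the true model is genuinely propagated through the recursion for all $k$.
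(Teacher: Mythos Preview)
Your inductive skeleton is fine, but you have misread Algorithm~\ref{alg:LTIreach1}, and this creates a genuine gap. The EF-ZRLS update (lines 2--5) runs \emph{once}, at the current data index $i$, producing a single estimate $\tilde{\mathcal{M}}_i$; the subsequent for-loop over $k=0,\dots,N-1$ is pure forward \emph{prediction} with no new measurements, so the recursion of Proposition~\ref{pro:rze} cannot be advanced and there is no object ``$\tilde{\mathcal{M}}_k$'' available inside that loop. Your guarantee (i) therefore delivers only $[A_{\text{tr},i},B_{\text{tr},i}]\in\tilde{\mathcal{M}}_i$; it says nothing about the true model at later prediction steps, which in the LTV setting may have drifted by up to $k\sigma_{AB}$ elementwise. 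This is precisely why line~10 forms $\mathcal{M}_k=\zono{C_{\tilde{\mathcal{M}}_i},[G_{\tilde{\mathcal{M}}_i},G_{\hat{\mathcal{M}}_k}]}$, inflating the fixed $\tilde{\mathcal{M}}_i$ by the drift zonotope $\hat{\mathcal{M}}_k$; establishing that the true model at prediction step $k$ lies in $\mathcal{M}_k$ then requires Lemma~\ref{le:matrix_upper_correctedz_inf}, which your argument never invokes. Your sentence ``Algorithm~\ref{alg:LTIreach1} computes $\hat{\mathcal{R}}_{k+1}$ as an over-approximation of $\tilde{\mathcal{M}}_k(\hat{\mathcal{R}}_k\times\mathcal{U}_k)\oplus\mathcal{Z}_w$'' is therefore a mischaracterization of line~\ref{ln:algLTIRhat}, and the induction as written does not close.

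A second, smaller omission: line~\ref{ln:algLTIRhat} also adds the spatial-generalization term $\mathcal{Z}_\epsilon$ (line~\ref{ln:alglipZeps}), tied to the ``enough dense data'' hypothesis in the theorem statement. Your induction drops it entirely. Since $\mathcal{Z}_\epsilon$ only enlarges $\hat{\mathcal{R}}_{k+1}$, the containment you are after would survive its presence, but a proof faithful to the algorithm should still account for where it comes from and why adding it preserves the inclusion.
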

\begin{proof}
Considering $w_k\in\mathcal{Z}_w$, we find $\mathcal{M}_k$ such that $[A_{\text{tr},k},B_{\text{tr},k}]\in \mathcal{M}_k$. Therefore we show for all $\begin{bmatrix}x_k\\u_k \end{bmatrix}\in \mathcal{F}$, we have
\begin{align*}
    x_{k+1} \in \mathcal{M}_k \begin{bmatrix} x_k \\ u_k \end{bmatrix} \oplus \mathcal{Z}_w \oplus \mathcal{Z}_{{\epsilon}}.
\end{align*}
for all $k = 0, 1, \dots,N-1$. Since the variations in the system model in~\eqref{eq:sys} are bounded by $\sigma_{AB}$, Assumption~\ref{ass:parametebound} guarantees that the perturbations satisfy $[\delta_{A_i}, \delta_{B_i}] \in G_{AB}$. By Proposition~\ref{pro:rze} and Theorem~\ref{th:optimalgain}, given $[A_{\text{tr},0}, B_{\text{tr},0}] \in \tilde{\mathcal{M}}_0$, there exists a scalar $\lambda \in (0,1]$ such that $[A_{\text{tr},i}, B_{\text{tr},i}] \in \tilde{\mathcal{M}}_i$ for all $i\ge0$. 

Since no new measured data is available when computing the reachable sets for steps \( k = 0, 1, \dots, N-1 \), we assume that the maximum deviation in the system model is bounded by \( \sigma_{AB} \). Applying Lemma~\ref{le:matrix_upper_correctedz_inf}, the zonotope
\[
    \mathcal{M}_k =  \zono{C_{\tilde{\mathcal{M}}_{i}},[G_{\tilde{\mathcal{M}}_{i}},G_{\hat{\mathcal{M}}_k}]} 
\]
where 
\[
    \hat{\mathcal{M}}_k = \zono{\textbf{0}, \left(k\sigma_{AB} E_1,\, k\sigma_{AB} E_2,\, \dots,\, k\sigma_{AB} E_{n_xn_u}\right)}
\]
contains all possible system models at time step \( k \).\par
To account for the spatial generalization error of the true system dynamics for any state-input pair $z_k=\begin{bmatrix}x^\top_k&u^\top_k\end{bmatrix}^\top \in \mathcal{F}$ at time step $k$. Since $\mathcal{F}$ is compact, we assume that all points $z_j \in D_{-}$ for $j=1,2,...,N_D$ are dense in $\mathcal{F}$. In other words, there exists some $\delta \geq 0$ such that for every $z_k \in \mathcal{F}$, there exists a $z_j \in D_{-}$ such that
\begin{align*}
\| {\mathcal{M}_k}z_k - {\mathcal{M}_k}z_j \| \leq I_{\mathcal{M}_{max}}  \delta.
\end{align*}
Where $I_{\mathcal{M}_k}=\text{int}(\mathcal{M}_k) $ and $I_{\mathcal{M}_{max}}=\sup_{k}\FFnorm{I_{\mathcal{M}_k}}$. This spatial generalization error is captured by the zonotope \cite{farjadnia2024robust}
\begin{align*}
 \mathcal{Z}_{{\epsilon}} = \zono{\textbf{0},I_{\mathcal{M}_{max}}  \delta/2,\dots,I_{\mathcal{M}_{max}} \delta/2)}.
\end{align*}
\end{proof}

\begin{algorithm}[t]
  \caption{LTV-Reachability}
  \label{alg:LTIreach1}
   %\textbf{Initialization}: Initialize ${M}={M}_0$ and $P = P_0$,\\
  \textbf{Input}: Input-state trajectory ${D_{-}}$, initial set $\mathcal{X}_{0}$, process noise zonotope $\mathcal{Z}_w$ and matrix zonotope $G_{AB}$, and input zonotope $\mathcal{U}_k$, $\forall k = 0, \dots,N-1$\\
    \textbf{Output}: Over approximate reachable sets $\hat{\mathcal{R}}_{k}, \forall k = 0, \dots,N-1$ 
  \begin{algorithmic}[1]
  \State $\hat{\mathcal{R}}_{0} =\mathcal{X}_{0}$,
   % \State Compute Correction gain $K $ \Comment{Use ~\ref{eq:CorrectionGain}},
    \State  Update the Center of model set,  \Comment{Use ~\ref{eq:C_update}} 
    \State  Update the generators of model set,  \Comment{Use ~\ref{eq:R_update}} 
    \State Update the identified model set $\tilde{\mathcal{M}}_i$, \Comment{Use ~\ref{eq:ZRLS_theta}}
    \State  Update convenience matrix,  \Comment{Use ~\ref{eq:Cov2}}

    \label{ln:alglipMtilde}
  \State Compute $\hat{\delta}$, \Comment{Use ~\ref{eq:deltaconstant}},
   \State $\mathcal{Z}_{{\epsilon}} = \zono{\textbf{0},I_{\mathcal{M}_{max}}  \hat{\delta}/2,\dots,I_{\mathcal{M}_{max}}  \hat{\delta}/2)}$, \label{ln:alglipZeps}

  \For{$k = 0:N-1$}
  \State $\hat{\mathcal{M}}_k = \zono{\textbf{0},[k\sigma_{AB} E_1,k\sigma_{AB} E_2,\dots,k\sigma_{AB} E_{n_xn_u}]}$,
  \State $        \mathcal{M}_k =  \zono{C_{\tilde{\mathcal{M}}_{i}},[G_{\tilde{\mathcal{M}}_{i}},G_{\hat{\mathcal{M}}_k}]} $
   
  \State $\hat{\mathcal{R}}_{k+1} =\mathcal{M}_k~ (\hat{\mathcal{R}}_{k} \times \mathcal{U}_{k}  ) + \mathcal{Z}_{{\epsilon}} +  \mathcal{Z}_w $. \label{ln:algLTIRhat}

  \EndFor
     % \State Update set $D$.

  \end{algorithmic}
\end{algorithm}
By compactness of $\mathcal{U}_k$, $\mathcal{R}_k$ for $k = 0,\dots,N$, the product space \(\mathcal{F} \) is compact. Therefore, for any $\delta > 0$, there exists a finite dataset $ \left\{z_j\right\}_{j=1}^{N_D} \subset \mathcal{F}, \quad \text{where} \quad z_j = \begin{bmatrix} (X_-)_{.,j}\\ (U_-)_{.,j} \end{bmatrix},$
that is $\delta$-dense in $\mathcal{F}$, i.e.,\(\forall z \in \mathcal{F}, \; \exists z_j \in D_{-} \text{ such that } \|z_k - z_j\| \leq \delta.\)
 \begin{remark}(\cite{alanwar2023data})
\label{rem:approxdelat}
Note that computing $\delta$ is non-trivial in practice. If we assume that the data is evenly spread out in the compact input set, then the following can be a good approximation of the $\delta$
\begin{align}
    \hat{\delta} &= \max_{z_i \in D_{-}} \min_{z_j \in D_{-}, j \neq i} \| z_i - z_j \|.
    \label{eq:deltaconstant}
\end{align}
The $\hat{\delta}$ may be estimated either from existing recorded data or from an updated sliding window $D_{-}$, which results in a more conservative estimate.

\end{remark}

\begin{remark}
Practical approaches exist for estimating $\sigma_{AB}$ in order to define $\hat{\mathcal{M}}_k$ in Algorithm~\eqref{alg:LTIreach1}. Moreover, the time-varying part of~\eqref{eq:sys} can be expressed as
\begin{align*}
    A_{\text{tr},k+N} &= A_{\text{tr},k} + \sum_{j=0}^{N-1} \delta A_{k+j}, \\
    B_{\text{tr},k+N} &= B_{\text{tr},k} + \sum_{j=0}^{N-1} \delta B_{k+j}.
\end{align*}
Without loss of generality, for slowly time-varying systems over a short horizon of $N$ steps, it can be assumed that 
$\sum_{j=0}^{N-1} \delta A_{k+j} \simeq 0$ and 
$\sum_{j=0}^{N-1} \delta B_{k+j} \simeq 0$. 
This approximation ensures that the over-approximation of the reachable sets remains valid. In fact, for most slowly varying linear systems, model variations become significant only over longer horizons, i.e., when $N \gg k$.
\end{remark}

\subsection{Lipschitz Nonlinear Systems} \label{sec:libs}

We consider a discrete-time Lipschitz nonlinear control system in \eqref{eq:sysnonlingen}. A local linearization of \eqref{eq:sysnonlingen} is performed by a Taylor series expansion around the time varying linearization point \cite{alanwar2023data} as $z_k^\star=\begin{bmatrix}x_k^\star\\u_k^\star \end{bmatrix}$
\begin{align*}
f(z_k) =& f(z_k^\star) + \frac{\partial f(z)}{\partial z_k}\Big|_{z_k=z_k^\star} (z_k - z_k^\star)+ \dots %\nonumber\\
\end{align*}
The infinite Taylor series \cite{conf:taylor} can be represented by a first-order Taylor series and a Lagrange remainder term $L(z_k)$ \cite[p.65]{conf:thesisalthoff}, that depends on the model, as follows.
\begin{align}
    f(z_k) = f(z_k^\star) + \frac{\partial f(z_k)}{\partial z_k}\Big|_{z_k=z_k^\star} (z_k - z_k^\star)
+ L(z_k).
\label{eq:linfL}
\end{align}
Since the model is assumed to be unknown, we aim to over-approximate $L(z)$ from data. We rewrite \eqref{eq:linfL} as follows
\begin{align}
\begin{split}
f(x_k,u_k) ={}& f(x_k^\star,u_k^\star) - \tilde{A}_{\text{tr}} x_k^\star - \tilde{B}_{\text{tr}} u_k^\star \\
&\quad + \tilde{A}_{\text{tr}} x_k + \tilde{B}_{\text{tr}} u_k + L(x_k,u_k)
\end{split}
\label{eq:LinearTalor}
\end{align}
where 
\[\tilde{A}_{\text{tr},k} =  {\frac{\partial f(x_k,u_k)}{\partial x_k}\Big|_{x=x_k^\star,u=u_k^\star}},
\tilde{B}_{\text{tr},k} = {\frac{\partial f(x_k,u_k)}{\partial u_k}\Big|_{x_k=x_k^\star,u_k=u_k^\star}}.\]
By substituting $\Delta f^\star_{\mathrm{tr},k}
=f(x_k^\star,u_k^\star) - \tilde{A}_{\text{tr},k} x^\star -\tilde{B}_{\text{tr},k}u_k^\star$ in  \eqref{eq:LinearTalor}, we obtain:
\begin{align}
f(x_k,u_k) = \begin{bmatrix}\Delta f^\star_{\mathrm{tr},k}
 & \tilde{A}_{\text{tr},k} & \tilde{B}_{\text{tr},k}\end{bmatrix} \begin{bmatrix}\textbf{1}\\x_k\\ u_k\end{bmatrix}+L(x_k,u_k).
\label{eq:LinearTalor2}
\end{align}
where \(\Delta f^\star_{\mathrm{tr},k}\), \(\tilde{A}_{\text{tr},k}\) and  \(\tilde{B}_{\text{tr},k}\) are unknown linearized system model which we use EF-ZRLS to estimate time-varying sets which contains these matrices. 
To over-approximate the remainder term $L(z_k)$ from data, we need to assume that $f$ is Lipschitz continuous for all $z$ in the reachable set $\mathcal{F}$. The continuity of nonlinear systems can be defined using the Lipschitz condition.
\begin{assumption}
    It holds that $f: \mathcal{F} \rightarrow \mathbb{R}^{n_x}$ is Lipschitz continuous, i.e., that there is some $L^\star \geq 0$ such that 
    $\| f(z) - f(z^{\prime}) \|_2 \leq L^\star \| z - z^{\prime}\|_2$
    holds for all $z, z^{\prime} \in \mathcal{F}$.
    \label{as:lipschitz}
\end{assumption}

For data-driven methods applied to nonlinear systems, Lipschitz continuity is a common assumption (e.g., \cite{conf:montenbruckLipschitz,conf:novaraLipschitz}). 
Our goal is to approximate the Lipschitz nonlinear system by a LTV system over discrete time steps. 
By invoking Proposition~\ref{pro:rze} and Theorem~\ref{th:optimalgain} and applying them to~\eqref{eq:LinearTalor2} without the Lipschitz term $L(x,u)$, we obtain a time-varying set containing $\begin{bmatrix}\Delta f^\star_{\mathrm{tr},k} & \tilde{A}_{\text{tr},k} & \tilde{B}_{\text{tr},k}\end{bmatrix}$.
Then, by employing a sliding window of length $N_D$ that stores the most recent $N_D$ data points, we compute an over-approximation of $L(x_k,u_k)$. The following theorem proves the over-approximation of the reachable sets ${\mathcal{R}}^\prime_{k}$ out of Algorithm~\ref{alg:LipReachability} for the exact reachable sets $\mathcal{R}_k$.

\begin{theorem}[\scriptsize{Reachability Analysis of Lipschitz Systems}]
\label{th:reachdisnonlin}
Given sufficiently dense data on the set \( D \) of length \( N_D \), satisfying the PE condition in Definition~\ref{def:PE}, and choosing a large enough zonotope radius \( G_{{\lambda}_0} \) such that the linearization model 
\[
\begin{bmatrix}\Delta f^\star_{\mathrm{tr},0} & \tilde{A}_{\mathrm{tr},0} & \tilde{B}_{\mathrm{tr},0} \end{bmatrix} 
\in {\tilde{\mathcal{M}}}_{f,0} = \zono{
\begin{bmatrix} \Delta f^\star_{0} & \tilde{A}_{0} & \tilde{B}_{0} \end{bmatrix}, G_{\lambda_0} },
\]
where \( \begin{bmatrix} \Delta f^\star_{0} & \tilde{A}_{0} & \tilde{B}_{0} \end{bmatrix} \) is the initial model estimation,  
then the reachable sets computed in Algorithm~\ref{alg:LipReachability} over-approximates the exact reachable sets, i.e., $\mathcal{R}_{k} \subseteq \mathcal{R}^{\prime}_k, \quad \forall k = 0, 1, \dots, N-1.$
\end{theorem}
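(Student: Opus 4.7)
The plan is to mirror the structure of the proof of Theorem~\ref{th:reach_lin} but with the additional burden of accounting for the nonlinear remainder term $L(x_k,u_k)$ in the Taylor expansion~\eqref{eq:LinearTalor2}. The goal is to show, inductively in $k$, that for every state--input pair $(x_k,u_k) \in \hat{\mathcal{R}}_k \times \mathcal{U}_k$, the one-step image $f(x_k,u_k)+w_k$ lies in a matrix-zonotope set propagated by Algorithm~\ref{alg:LipReachability}, and then conclude $\mathcal{R}_{k+1} \subseteq \mathcal{R}'_{k+1}$.

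First I would apply Proposition~\ref{pro:rze} and Theorem~\ref{th:optimalgain} to the augmented regression model~\eqref{eq:LinearTalor2}, using as data the stacked regressor $[\mathbf{1}^\top,\,x_k^\top,\,u_k^\top]^\top$ and output $x_{k+1}-w_k-L(x_k,u_k)$. Because the PE condition holds on the sliding window of length $N_D$ and $G_{\lambda_0}$ is chosen large enough, the initial containment
\[
\begin{bmatrix}\Delta f^\star_{\mathrm{tr},0} & \tilde{A}_{\mathrm{tr},0} & \tilde{B}_{\mathrm{tr},0}\end{bmatrix} \in \tilde{\mathcal{M}}_{f,0}
\]
propagates recursively, yielding $[\Delta f^\star_{\mathrm{tr},i}\; \tilde{A}_{\mathrm{tr},i}\; \tilde{B}_{\mathrm{tr},i}] \in \tilde{\mathcal{M}}_{f,i}$ for all $i\geq 0$, analogous to the LTV argument. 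As in Theorem~\ref{th:reach_lin}, this linearization set is then inflated into an enlarged matrix zonotope $\mathcal{M}_{f,k}$ via Lemma~\ref{le:matrix_upper_correctedz_inf} to absorb the bounded drift of the linearization matrices over the prediction horizon $k=0,\dots,N-1$, using the parameter-variation bound $\sigma_{AB}$ (now applied to the $\Delta f^\star, \tilde{A}, \tilde{B}$ block).

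Next I would handle the Lagrange remainder $L(x_k,u_k)$. By Assumption~\ref{as:lipschitz}, $f$ is Lipschitz on the compact set $\mathcal{F}$ with constant $L^\star$, and this transfers to a Lipschitz bound on $L(\cdot)$ since the linearization terms are themselves smooth and bounded. Using the $\hat{\delta}$-density from Remark~\ref{rem:approxdelat}, for every $z_k\in \hat{\mathcal{R}}_k\times\mathcal{U}_k$ there exists a measured $z_j\in D_-$ within distance $\hat\delta$; combining this with the Lipschitz inequality yields a zonotopic over-approximation $\mathcal{Z}_L$ of $L(x_k,u_k)$ whose generators scale with $L^\star\hat{\delta}$. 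The spatial generalization zonotope $\mathcal{Z}_\epsilon$ is constructed exactly as in Theorem~\ref{th:reach_lin}, from $I_{\mathcal{M}_{\max}}$ and $\hat\delta$. Substituting~\eqref{eq:LinearTalor2} into the dynamics~\eqref{eq:sysnonlingen} and applying the linear map and Minkowski sum properties from~\eqref{eq:minkowski} and~\eqref{eq:cart}, one gets
\[
x_{k+1} \in \mathcal{M}_{f,k}\,\bigl([\mathbf{1}]\times \hat{\mathcal{R}}_k \times \mathcal{U}_k\bigr) \oplus \mathcal{Z}_\epsilon \oplus \mathcal{Z}_L \oplus \mathcal{Z}_w,
\]
which is precisely the update rule implemented in Algorithm~\ref{alg:LipReachability}.

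Finally, I would close the induction: the base case $\hat{\mathcal{R}}_0 = \mathcal{X}_0 \supseteq \{x_0\}$ is immediate, and the inductive step follows from the containment above together with $\mathcal{U}_k \supseteq \{u_k\}$ and $\mathcal{Z}_w \ni w_k$. The hard part is the Lipschitz-remainder bound: one has to argue carefully that $L^\star$ (or a surrogate estimated from data) captures not only $f$ itself but the deviation of $f$ from its \emph{time-varying} linearization at the moving operating point $z_k^\star$, and that the data-driven estimate remains valid uniformly over the reachable tube. This step is where the nonlinear case genuinely departs from the LTV argument, and any tightness in the final over-approximation hinges on balancing $\hat\delta$, $L^\star$, and the generator growth induced by the forgetting factor $\lambda$ in~\eqref{eq:R_update_a}.
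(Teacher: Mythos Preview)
Your overall scaffold (EF-ZRLS for the augmented regressor, inflation via Lemma~\ref{le:matrix_upper_correctedz_inf}, then induction) matches the paper, but there is a genuine gap in how you handle the Lagrange remainder, and it is precisely the place where the nonlinear proof departs from the LTV one.

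You construct $\mathcal{Z}_L$ from the Lipschitz inequality and the covering radius $\hat\delta$, with ``generators scaling with $L^\star\hat\delta$'', and then build a separate $\mathcal{Z}_\epsilon$ from $I_{\mathcal{M}_{\max}}$ exactly as in Theorem~\ref{th:reach_lin}. Neither piece is what Algorithm~\ref{alg:LipReachability} does, and your $\mathcal{Z}_L$ does not over-approximate $L(z_k)$: Lipschitz continuity only bounds differences $\|f(z_k)-f(z_j)\|$ (or $\|L(z_k)-L(z_j)\|$), not the value $L(z_k)$ itself. A bound of size $L^\star\hat\delta$ says nothing about how large $L(z_j)$ is at a data point. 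The paper's argument instead computes $\mathcal{Z}_L$ \emph{from the data residuals}: for each $z_j\in D_-$ one evaluates $(X_+)_{.,j}-C_{\tilde{\mathcal M}_{f,\bar i}}\begin{bmatrix}\mathbf 1\\z_j\end{bmatrix}$, which equals $L(z_j)+w_j+\Delta M_{\bar i}\begin{bmatrix}\mathbf 1\\z_j\end{bmatrix}$, and then takes the element-wise $\min/\max$ over $j$ (lines~\ref{ln:algliplupper}--\ref{ln:alglipZl}). This is the only way to pin down the remainder without knowing the model, and it is the missing idea in your proposal.

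The Lipschitz bound enters afterwards, and for a different purpose: once $\mathcal{Z}_L$ captures the mismatch at the data points, Assumption~\ref{as:lipschitz} together with the $\delta$-density extends coverage to every $z\in\mathcal{F}$ via $\mathcal{Z}_{\bar\epsilon}=\zono{\mathbf 0,\mathrm{diag}(L^{\star(1)}\delta/2,\dots,L^{\star(n_x)}\delta/2)}$. There is no $I_{\mathcal{M}_{\max}}$-based $\mathcal{Z}_\epsilon$ in the nonlinear algorithm; that role is taken over by $\mathcal{Z}_{\bar\epsilon}$. In short, you have the two correction zonotopes swapped: $\mathcal{Z}_L$ comes from data, $\mathcal{Z}_{\bar\epsilon}$ comes from Lipschitz.
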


\begin{proof}
We use the proof of \cite[Theorem 5]{alanwar2023data} and extend it to this work. We know from~\eqref{eq:LinearTalor2} that
\begin{align*}
    f(z_k) = M_{\text{tr}_k} \begin{bmatrix} \textbf{1} \\ z_k \end{bmatrix} + L(z_k),
\end{align*}
where  $M_{\text{tr}_k} = \begin{bmatrix}\Delta f^\star_{\mathrm{tr},k}
 & \tilde{A}_{\text{tr},k} & \tilde{B}_{\text{tr},k}\end{bmatrix}$.
Hence, we need to show that $\mathcal{Z}_L + \mathcal{Z}_{\bar{\epsilon}}$  over-approximates the modeling mismatch and the term $L(z)$
\begin{align*}
   f(z_k) \in {\mathcal{M}}_k \begin{bmatrix} \textbf{1} \\ z_k \end{bmatrix} \oplus \mathcal{Z}_L \oplus \mathcal{Z}_{\bar{\epsilon}}.
\end{align*}
where $M_{\text{tr}_k}\in{\mathcal{M}}_{k}$ for all $z_k \in \mathcal{F}$. 
Since \( f(z_k) \) is Lipschitz, we can conclude that \( M_{\mathrm{tr},k} \) is bounded. Therefore, we can write 
\[
M_{\mathrm{tr},i} = M_{\mathrm{tr},i-1} + \delta M_{i-1}.
\]
where \( \delta M_{i-1} \) denotes the variation at step \( i-1 \). If we can show that \( \| \delta M_{i-1} \|_{max} \leq {\sigma}_{M} \), then, similar to Assumption \ref{ass:parametebound}, we can define a set \( G_{M} \) that over-approximates the system model changes, i.e., \( \delta M_{i-1} \in G_{M} \).

Given that \( w_i \in \mathcal{Z}_w \) and by invoking Proposition~\ref{pro:rze} and Theorem \ref{th:optimalgain}, we can find a \( \lambda \in (0,1] \) such that 
\[
M_{\mathrm{tr},i} \in \tilde{\mathcal{M}}_{f,i}, \quad \forall i\ge0.
\]
while \( M_{\mathrm{tr},0} \in \tilde{\mathcal{M}}_{f,0} \). By defining $\bar{i}=i-N_D+j$To over-approximate $L(z_j)$, we know that for all $z_j\in D_{-}$, we can write 
\begin{align*}
    (X_+)_{\cdot,j} - (W_-)_{\cdot, j} = (C_{\tilde{\mathcal{M}}_{f,\bar{i}}} + \Delta M_{\bar{i}}) \begin{bmatrix} \textbf{1} \\z_j  \end{bmatrix} \\ \nonumber
     + L(z_j),\forall j =1,2,\dots,N_D.
\end{align*}
where $\Delta M_{\bar{i}}=M_{\mathrm{tr},\bar{i}}-C_{\tilde{\mathcal{M}}_{f,\bar{i}}}$. This implies 
\begin{align}
     (X_+)_{\cdot,j}  - C_{\tilde{\mathcal{M}}_{f,\bar{i}}} \begin{bmatrix} \textbf{1}\\ z_j  \end{bmatrix} \in
     \Delta\tilde{\mathcal{M}}_{f,\bar{i}}  + L(z_j)\\ \nonumber
     + \mathcal{Z}_w.
    \label{eq:deltamindata}
\end{align}
where $\Delta\tilde{\mathcal{M}}_{f,\bar{i}}=\zono{\textbf{0},G_{\tilde{\mathcal{M}}_{f,\bar{i}}}}$. Next, we aim to find one zonotope $\mathcal{Z}_L$ that over-approximates $ L(z_j) $ for all the data points, i.e.,  $\forall z_j \in D_{-}$
\begin{align*}
 (X_+)_{\cdot,j} - C_{\tilde{\mathcal{M}}_{f,\bar{i}}} \begin{bmatrix} \textbf{1} \\z_j  \end{bmatrix}
 \in \Delta\tilde{\mathcal{M}}_{f,\bar{i}} \oplus \mathcal{Z}_L \oplus \mathcal{Z}_w.    
\end{align*}
 This can be found by the upper bound ($\overline{l}$ in line~\ref{ln:algliplupper}) and lower bound ($\underline{l}$ in line~\ref{ln:alglipllower}) from data, and hence $\mathcal{Z}_L$ in line~\ref{ln:alglipZl} in Algorithm \ref{alg:LipReachability}. Thus, we can over-approximate the model mismatch and the nonlinearity term for all data points $z_j \in D_{-}$, by
\begin{align*}
    f(z_j) \in \tilde{\mathcal{M}}_{f,\bar{i}} \begin{bmatrix} \textbf{1} \\ z_j  \end{bmatrix} + \mathcal{Z}_L.
\end{align*}
Given the covering radius $\delta$ of our system together with Assumption~\ref{as:lipschitz}, we know that for every $z \in \mathcal{F}$, there exists a $z_j \in D_{-}$ such that
    $\| f(z_k) - f(z_j) \| \leq L^\star \| z_k - z_j \| \leq L^\star \delta$.
This yields $\mathcal{Z}_{\bar{\epsilon}} = \zono{\textbf{0},\text{diag}(L^{\star^{(1)}} \delta/2,\dots,L^{\star^{(n_x)}} \delta/2)}$.\par
To complete the proof, similar to Theorem~\ref{th:reach_lin}, we need to use $\hat{\mathcal{M}}_k$ to overapproximate the reachable sets when no new measurement is available by using Lemma~\ref{le:matrix_upper_correctedz_inf}. Then using  
\[
    \mathcal{M}_{f,k} = \zono{C_{\tilde{\mathcal{M}}_{f,i}},[G_{\tilde{\mathcal{M}}_{f,i}},G_{\hat{\mathcal{M}}_k}]}
\]
yields 
\begin{align*}
    f(z_k) \in \mathcal{M}_{f,k} \begin{bmatrix} \textbf{1} \\ z_k \end{bmatrix} +  \mathcal{Z}_L + \mathcal{Z}_{\bar{\epsilon}}.
\end{align*}
\end{proof}
For an infinite amount of data, i.e., $\delta \rightarrow 0$, we can see that $\mathcal{Z}_{\bar{\epsilon}} \rightarrow 0$, i.e., the formal $\mathcal{Z}_L$ then fully captures the modeling mismatch and the Lagrange remainder. 
\begin{algorithm}[t]
  \caption{Nonlinear Lipschitz-Reachability}
  \label{alg:LipReachability}
 % \textbf{Initialization}: Initialize ${M}={M}_0$ and $P = P_0$,\\
  \textbf{Input}: Input-state trajectory ${D}$, initial set $\mathcal{X}_{0}$, process noise zonotope $\mathcal{Z}_w$ and matrix zonotope $G_{M}$, and input zonotope $\mathcal{U}_k$. \\
  \textbf{Output}: Over approximate reachable sets ${\mathcal{R}}^\prime_{k}, \forall k = 0, \dots,N-1$ 
  \begin{algorithmic}[1]
  \State ${\mathcal{R}}^\prime_{0} =\mathcal{X}_{0}$,

    \State  Update the Center of model set,\Comment{Use ~\ref{eq:C_update}} 
    \State  Update the generators of model set, \Comment{Use ~\ref{eq:R_update}} 
    \State Update the identified model set, $\tilde{\mathcal{M}}_{f,i}$ \Comment{Use ~\ref{eq:ZRLS_theta}}
    \State  Update Convenience Matrix,  \Comment{Use ~\ref{eq:Cov2}} 
\State $\overline{l} = \max_j \Big( (X_{+})_{.,j} - 
C_{\tilde{\mathcal{M}}_{f,\bar{i}}}
\begin{bmatrix} \textbf{1} \\ (X_{-})_{.,j} \\ (U_{-})_{.,j} \end{bmatrix} \Big)$
\label{ln:algliplupper}

\State $\underline{l} = \min_j \Big( (X_{+})_{.,j} - 
C_{\tilde{\mathcal{M}}_{f,\bar{i}}}
\begin{bmatrix} \textbf{1} \\ (X_{-})_{.,j} \\ (U_{-})_{.,j} \end{bmatrix} \Big)$
\label{ln:alglipllower}

\State $\mathcal{Z}_{L} = \text{zonotope}(\underline{l},\overline{l}),$ \label{ln:alglipZl} 
\State Compute $\hat{L}^\star$ and $\hat{\delta}$, \Comment{Use ~\ref{eq:lipschizconstant} and \ref{eq:deltaconstant}}
  \State $\mathcal{Z}_{\bar{\epsilon}} = \zono{\textbf{0},\textup{diag}(\hat{L}^{\star^{(1)}} \hat{\delta}/2,\dots,\hat{L}^{\star^{(n_x)}} \hat{\delta}/2)}$, \label{ln:alglipZeps}
  \For{$k = 0:N-1$}
   \State $\hat{\mathcal{M}}_k = \zono{\textbf{0},[k\sigma_{AB} E_1,k\sigma_{AB} E_2,\dots,k\sigma_{AB} E_{n_xn_u}]}$,
  \State $    \mathcal{M}_{f,k} =  \zono{C_{\tilde{\mathcal{M}}_{f,i}},[G_{\tilde{\mathcal{M}}_{f,i}},G_{\hat{\mathcal{M}}_k}]}$,
  \State ${\mathcal{R}}^\prime_{k+1} {=} \mathcal{M}_{f,k} \Big(1 \times{\mathcal{R}}^\prime_{k}\times \mathcal{U}_k\Big) +  \mathcal{Z}_L + \mathcal{Z}_{\bar{\epsilon}}  +  \mathcal{Z}_w$\label{ln:alglipRprime}.
  \EndFor
  %\State Update set $D$,

  \end{algorithmic}
\end{algorithm}
\begin{remark}[~\cite{alanwar2023data}]
\label{rem:approx}
Note that determining $L^\star$ is non-trivial in practic. If we assume that the data is evenly spread out in the compact input set of $f$, then the following can be a good approximation of the upper bound on $L^\star$ for each dimension $o$:
\begin{align}
     \hat{L}^{\star^{(o)}} &= \max_{z_i, z_j \in D_{-}, i\neq j}  \frac{\| f^{(o)}(z_i) - f^{(o)}(z_j) \|}{\| z_i - z_j \|}.
    \label{eq:lipschizconstant}
\end{align}
Computing the Lipschitz constant for each dimension decreases the conservatism, especially when the data has a different scale for each dimension.% Other methods to calculate the Lipschitz constant $L^\star$ can be found in~\cite{conf:montenbruckLipschitz,conf:novaraLipschitz}, and a sampling strategy to obtain a specific $\delta$ is introduced in~\cite{conf:montenbruckLipschitz}.
Furthermore, note that from the proof of the above theorem, we see that for every reachability step, $k=1, \dots, N$, local information on $L^\star$ in the set $\mathcal{R}_k \times \mathcal{U}_k $ can be used. Since in each time step a new sample date is added to $D_{-}$, the $L^\star$ approximation is updated.
\end{remark}

\begin{remark}
    Initially, when the dataset $D$ is sparse, the over-approximation guarantees are not theoretically ensured. However, as data accumulates and $D$ becomes denser in $\mathcal{F}$, the reachability results become provably conservative.
\end{remark}

\section{Evaluation}\label{sec:evauation}
In this section, we apply the computational approaches to over-approximate the exact reachable sets from data. We collect online data, where the system model is unknown. We compare results with the method proposed in \cite{alanwar2023data} to compute the reachable sets, which uses a batch Least Squares (LS) method to estimate a set of models given a rich enough offline data set to estimate models. 

\subsection{Example 1: A LTV System}\label{ex1}
To demonstrate the usefulness of the presented approach, we consider the reachability analysis of a five-dimensional system $A_k\in \R^{5\times 5}$ and $B_k\in \R^{5}$ with a single input, which is a discretization of the system used in \cite[p.39]{conf:thesisalthoff} with sampling time $0.1$ sec. %All used code to recreate our findings are publicly available\footnotemark. 
%The system has the following model.
%\footnotetext{\href{https://github.com/aalanwar/Data-Driven-Reachability-Analysis}{https://github.com/aalanwar/Data-Driven-Reachability-Analysis}}
%\begin{align}
%\begin{split}
%A_{\text{tr}}&=\begin{bmatrix}
  %  0.9323 &  -0.1890   &      0   &      0   &      0 \\
  %  0.1890 &   0.9323  &       0  &       0   &      0 \\
   %      0 &        0  &  0.8596  &   0.0430  &        0 \\
   %      0 &         0   & -0.0430    & 0.8596      &    0 \\
   %      0 &         0  &        0    &      0   &  0.9048
%\end{bmatrix}, \\
%B_{\text{tr}}&=\begin{bmatrix} 
%    0.0436&
%    0.0533&
%    0.0475&
%    0.0453&
%    0.0476
 %   \end{bmatrix}^\t.
%\end{split}
%\label{eq:sysexamplediscrete}
%\end{align}
The initial set is chosen to be $\mathcal{X}_0=\zono{\textbf{1},0.1 I_{5}}$. The input set is $\mathcal{U}_k=\zono{10,2.25}$. We consider computing the reachable set when there is a random noise sampled from the zonotope  $\mathcal{Z}_w=\zono{\textbf{0},0.005I_{5}}$.

To evaluate the Algorithm \ref{alg:LTIreach1}, we simulate the following senarious,
\begin{enumerate}

    %\item \label{ex:1.A} A trajectory of length $N_{\bar{D}}=3$ is used as input data on $\bar{D}$, which is not rich enough (it is not PE) to estimate parameters with the LS method. There is no change in parameters (e.g. $\rho=0$).

    \item \label{ex:1.B}A trajectory of length $N_{{D}}=30$ is used as input data on ${D}$, which is rich enough to estimate parameters with the LS method. There is no change in parameters (e.g. $\sigma_{AB}=0$).  

    \item\label{ex:1.C} A trajectory of length $N_{{D}}=30$ is used as input data on ${D}$, which is rich enough to estimate parameters. The parameters change over time-steps as $\delta A_k=-0.0001 \textbf{1}_{5\times5}$ and $\delta B_k=0.0003 \textbf{1}_{5\times5}$ and $\sigma_{AB}=0.0003$.  
\end{enumerate}
For both scenarios, we consider the initial center $C_0=\textbf{0}$ and the initial convenience matrix $P_0 = 10^{7}I_{5}$, and the initial generator $G_0=1.5E_i$ for $i=1,2,...,n_xn_u$ where $E_i$ are the standard basis matrices with a single non-zero element of 1.\par

\begin{figure*}[!htbp]
\vspace{-0.05cm}
    \centering
  %  \begin{tabular}{ p{0.330\textwidth}  p{0.330\textwidth} p{0.330\textwidth}}
 %   \resizebox{0.32\textwidth}{!}{
    \begin{subfigure}[h]{0.32\textwidth}
     \centering
        \includegraphics[scale=0.3]{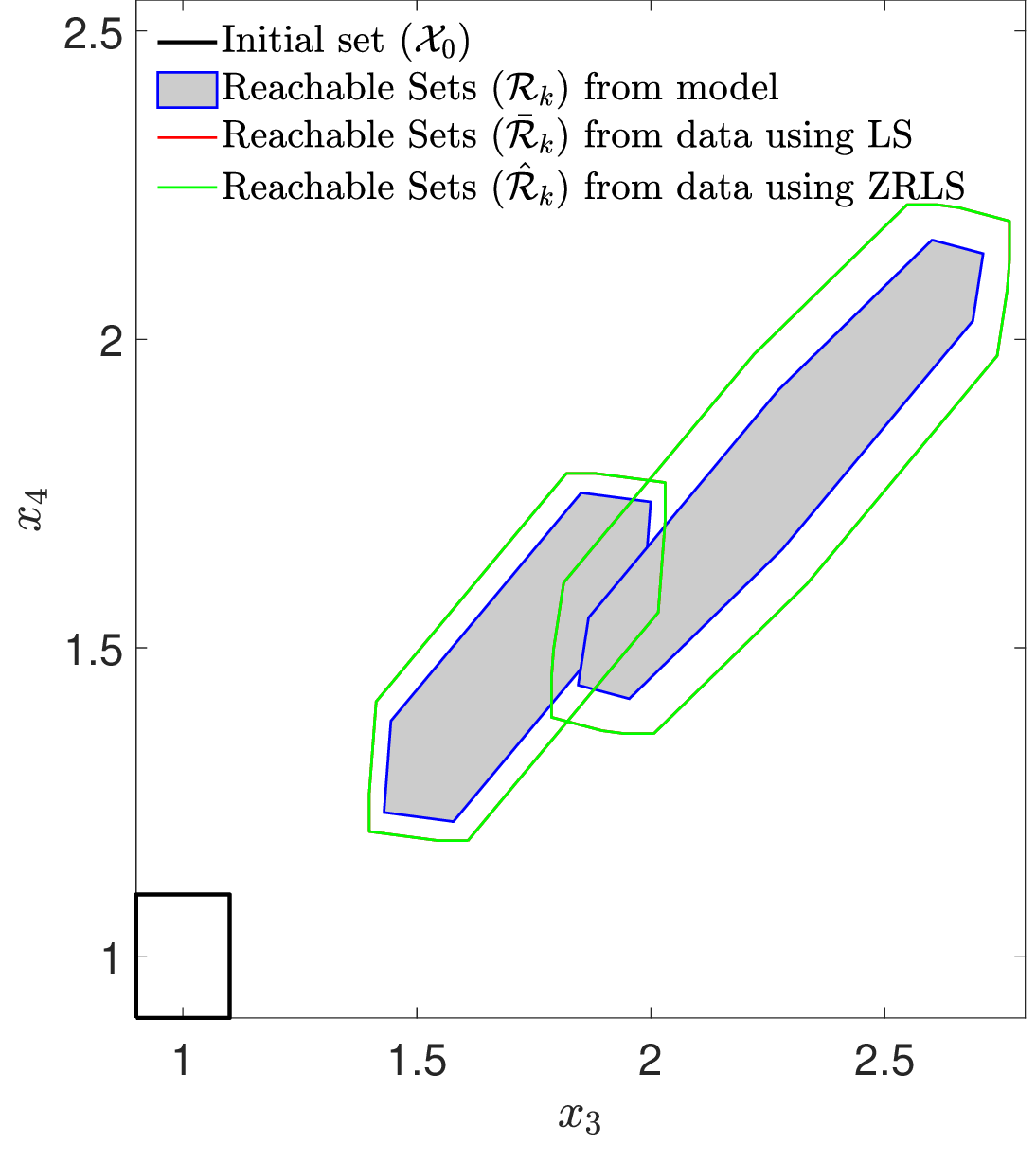}
        \caption{}
        \label{fig:ex1_B2}
    \end{subfigure}
  %    } 
% &
%  &
 % \resizebox{0.32\textwidth}{!}{
    \begin{subfigure}[h]{0.32\textwidth}
     \centering
        \includegraphics[scale=0.3]{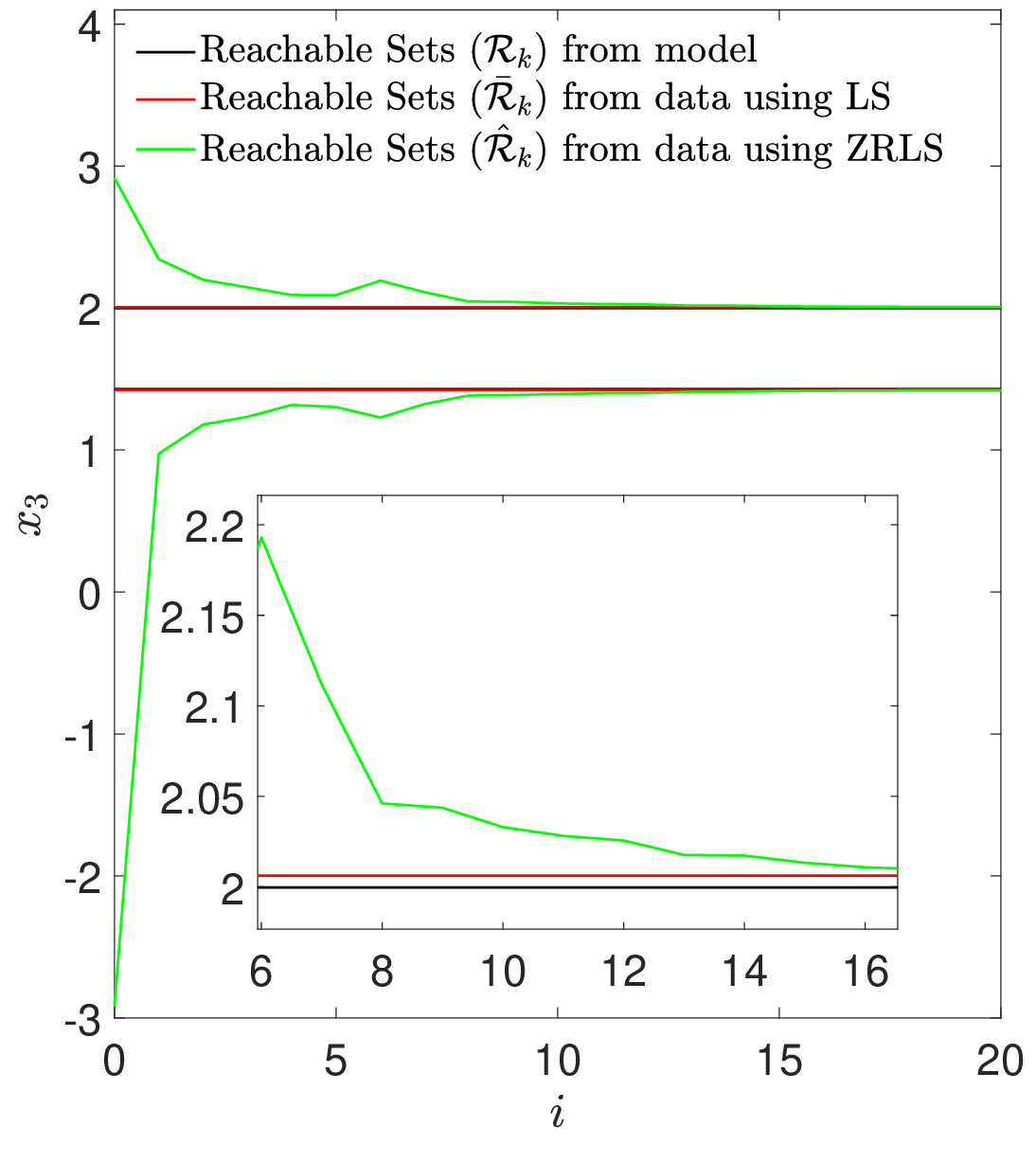}%{Figures/x1x2_moredataV2em6.eps}
        \caption{}
        \label{fig:ex1_A3}
    \end{subfigure}
  %  }
%  &
  %      &
%  \resizebox{0.32\textwidth}{!}{
    \begin{subfigure}[h]{0.32\textwidth}
     \centering
        \includegraphics[scale=0.3]{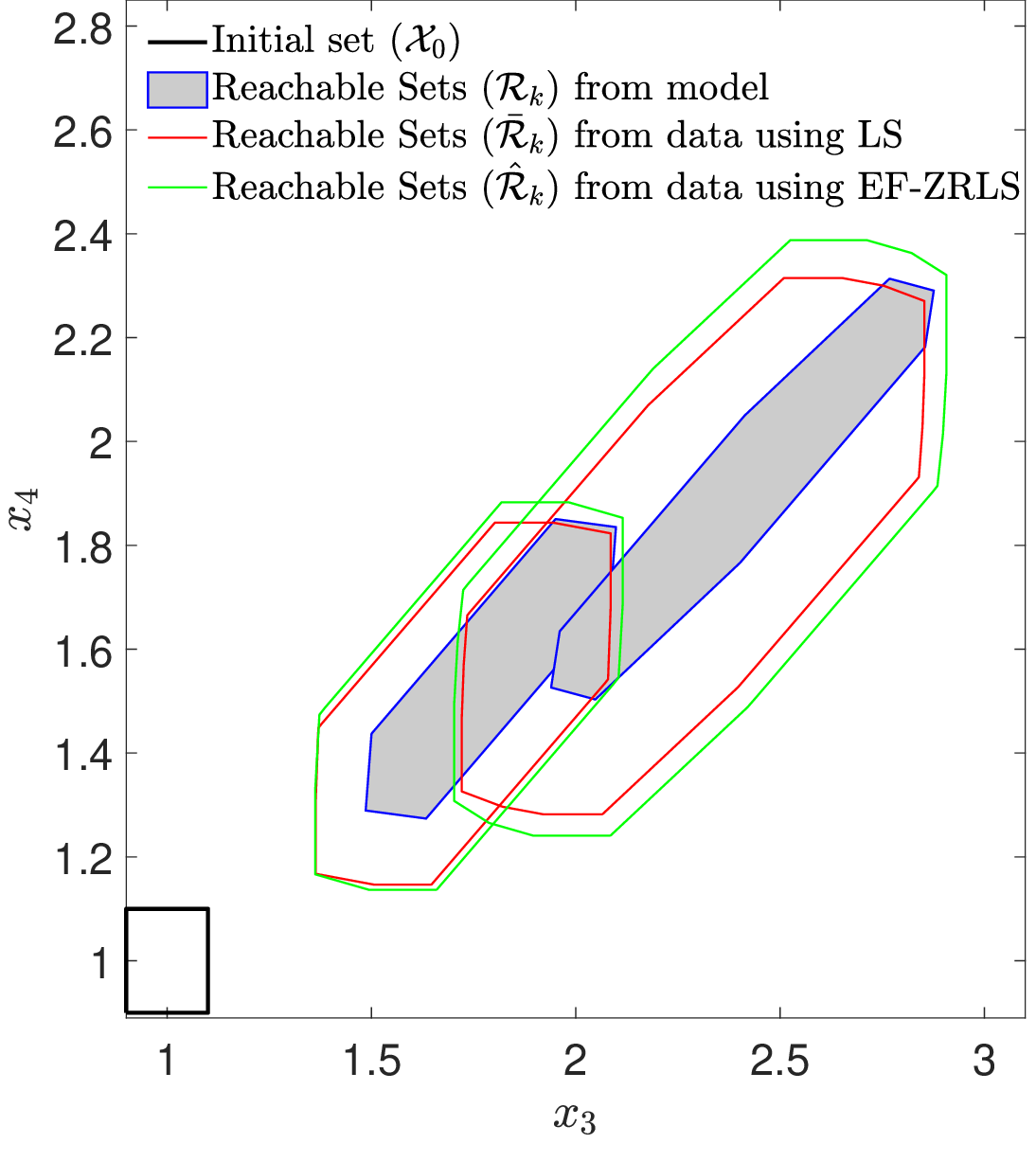}%{Figures/x1x2_moredataV2em5.eps}
        \caption{}
        \label{fig:ex1_C2}
    \end{subfigure}
%     }
% \end{tabular}
\caption{The projection of the reachable sets of the LTV system in \eqref{eq:sys} computed via Algorithm~\ref{alg:LTIreach1} ($\hat{\mathcal{R}}_k$), the LS method ($\bar{\mathcal{R}}_k$) computed in \cite[Algorithm 1]{alanwar2023data}, from noisy input-state data. There is no change in system matrices (e.g. $\sigma_{AB}=0$)in (a) and (b), and $\sigma_{AB}=0.0003$ in (c).  }
    \label{fig:contreach}%\vspace{-4mm}
     \vspace{-1mm}
\end{figure*}

The reachable set computed by Algorithm \ref{alg:LTIreach1} converges, after a number of iterations, to the reachable set obtained using the LS method \cite[Algorithm 1]{alanwar2023data}, as illustrated in Figure \ref{fig:ex1_B2}. In fact, the model set estimated by the proposed ZRLS (equivalent to EF-ZRLS with $\lambda=1$) eventually converges to the optimal model set computed by the LS method, which relies on an offline batch of data stored in $D$. To ensure an over-approximation of the reachable sets at the initial step, it is necessary to choose a sufficiently large initial generator so that the true system model is guaranteed to lie within the initial model set. Once this condition is satisfied, the estimated model set converges to the optimal set. The convergence rate is affected by the choice of the initial covariance matrix, the generator matrices, and the lower bound of the PE condition specified in Definition \ref{def:PE}. Figure \ref{fig:ex1_A3} illustrates the convergence of the over-approximated $x_2$ trajectory over time steps.

As shown in Figure \ref{fig:ex1_C2}, when the system model changes over time steps, the reachable sets computed by the LS method fail to over-approximate the exact reachable sets, while the reachable sets computed by Algorithm \ref{alg:LTIreach1} successfully over-approximate exact reachable sets. The EF-ZRLS with $\lambda=0.92$ , tries to minimize the covariance matrix affected by recent data, while the LS method considers the same weight of estimation error for all data on ${D}$. Decreasing $\lambda$ leads to having more conservative reachable states since the estimated model set is enlarged.

\subsection{Example 2: Lipschitz Nonlinear Systems}\label{ex2}

We apply the proposed data-driven reachability analysis to a continuous stirred tank reactor (CSTR) simulation model \cite{conf:nonlinearexample}. The initial set is a zonotope $\mathcal{X}_0 =\zono{[1.35,10.9]^\top,\text{diag}([ 0.005  , 0.3 ])}$. The input set $\mathcal{U}_k =\zono{[1.1 , -1.3]^\top,\text{diag}([ 0.1 , 0.2])}$ and the noise set $\mathcal{Z}_w=\zono{0,0.003I_{2}}$. We define a set $\bar{D}$ that contains 35 trajectories with length $N_{\bar{D}}$ collected offline to be used in the LS method in \cite{alanwar2023data} to compute reachability sets. The length of the moving window $D$ is $N_D=5$, which contains the latest data. The $\lambda=0.96$ is selected in EF-ZRLS, and the initial covariance matrix, initial set, and generators are the same as example 1. In addition, model variations are assumed to be negligible in computing the four short horizons $N=4$.  

In Figure \ref{fig:ex2_A}, the reachable sets computed by Algorithm \ref{alg:LipReachability} are shown over four steps ahead. As illustrated, these reachable sets are less conservative than those obtained using the LS method proposed in \cite{alanwar2023data}, since the Lagrangian remainder that captures nonlinearities is computed globally over the entire dataset $\bar{D}$, and all data in $\bar{D}$ are used to estimate the model for system linearization around the current point. In contrast, Algorithm \ref{alg:LipReachability} computes reachable sets in a more local manner: the linearized system model is updated recursively based on the latest data, and the set $\mathcal{Z}_L$, which over-approximates the Lagrangian remainder, is constructed from a moving window of data $D$. Increasing the length of stored data in $D$ results in more conservative reachable sets, while also enabling the capture of nonlinearities over a longer prediction horizon $N$. It is worth noting that the data corresponding to the prediction horizon are already contained in $\bar{D}$.

\subsection{Example 3: Autonomous Vehicle}

We utilized the JetRacer ROS AI Kit, an autonomous racing robot with an NVIDIA Jetson Nano Developer Kit (4GB RAM) and Raspberry Pi RP2040 dual-core microcontroller. %The robot features an RPLIDAR A1 laser scanner for 360-degree environmental mapping. The platform runs on ROS and supports SLAM mapping, autonomous navigation, and deep learning algorithms. 
The inputs to the vehicle are the linear and angular velocity, and the outputs are the position and heading of the vehicle.\par
We consider process noise bounded by the zonotope $\mathcal{Z}_w=\zono{\textbf{0}_{3\times3},0.0005I_{3}}$. The forgetting factor $\lambda=0.99$ with horizon step $N=3$ and $N_D=5$ is considered for implementing Algorithm \ref{alg:LipReachability}. The initial covariance matrix is set $P_0=100 I_{3}$, and the initial generator and center are the same as the previous examples. We consider two experimental scenarios. In the first scenario, the car and its environment remain unchanged over time. In the second one, a 2 $kg$ payload is added on top of the car.\par   
The over-approximated reachable sets computed using Algorithm~\ref{alg:LipReachability} are illustrated in Figure~\ref{fig:carA}. The larger sets observed at the beginning of the experiment result from the use of a large initial covariance matrix and initial generator, reflecting the uncertain initial value of the center of the model set due to the absence of the true initial model. In addition, our proposed method is less conservative compared to the LS method in \cite{alanwar2023data}, while we do not need to have pre-collected data for identifying model.\par 
In the second scenario, a payload is added on top of the car to evaluate the performance of our method under model changes. As shown in Figure~\ref{fig:carB}, the proposed method remains robust to model variations, whereas the LS method fails to over-approximate the reachable sets.

\begin{figure*}[!htbp]
\vspace{-0.05cm}
    \centering
  %  \begin{tabular}{ p{0.330\textwidth}  p{0.330\textwidth} p{0.330\textwidth}}
 %   \resizebox{0.32\textwidth}{!}{
    \begin{subfigure}[h]{0.32\textwidth}
     \centering
        \includegraphics[scale=0.26]{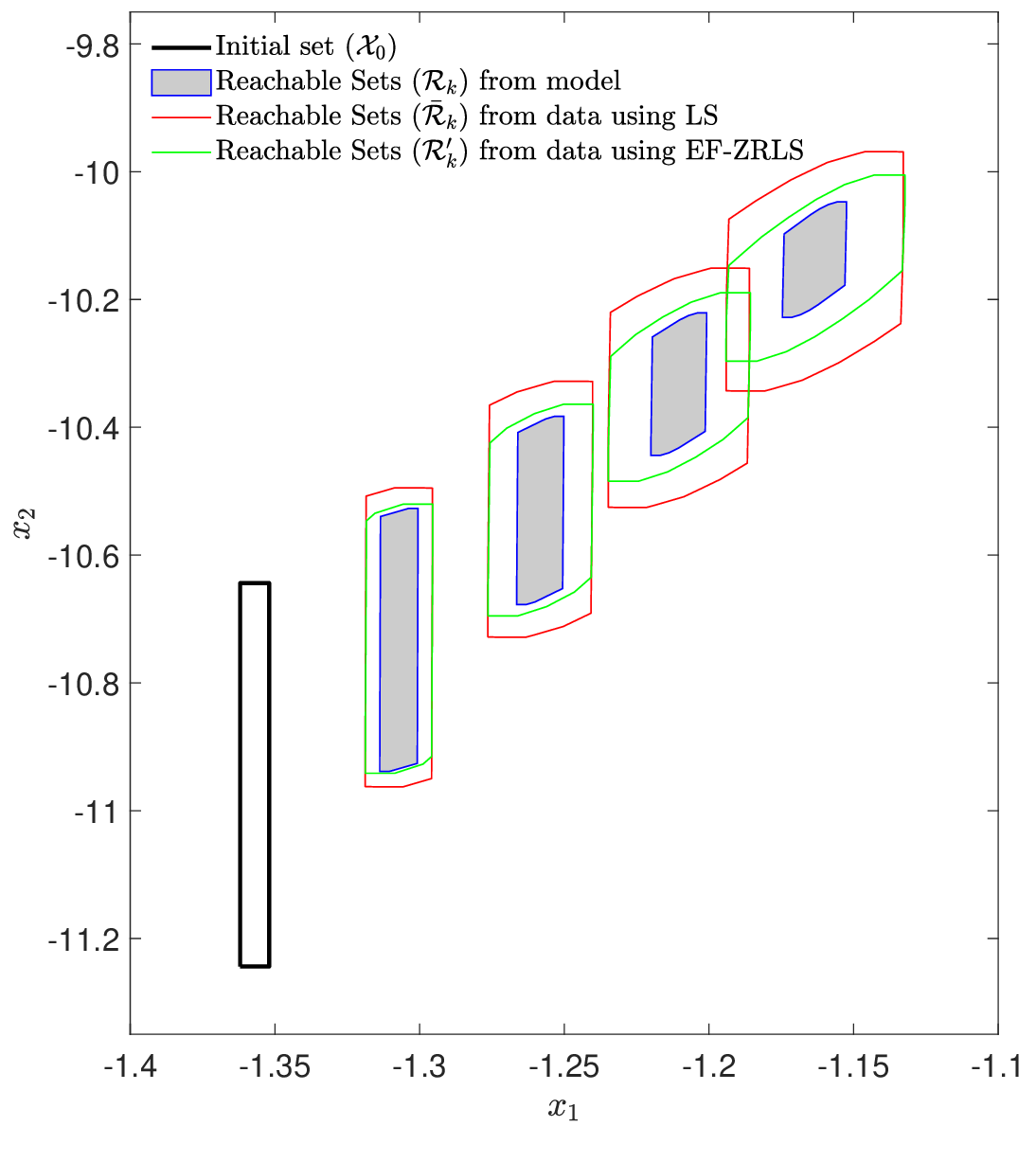}
        \caption{}
        \label{fig:ex2_A}
    \end{subfigure}
  %    } 
% &
%  &
 % \resizebox{0.32\textwidth}{!}{
    \begin{subfigure}[h]{0.32\textwidth}
     \centering
        \includegraphics[scale=0.26]{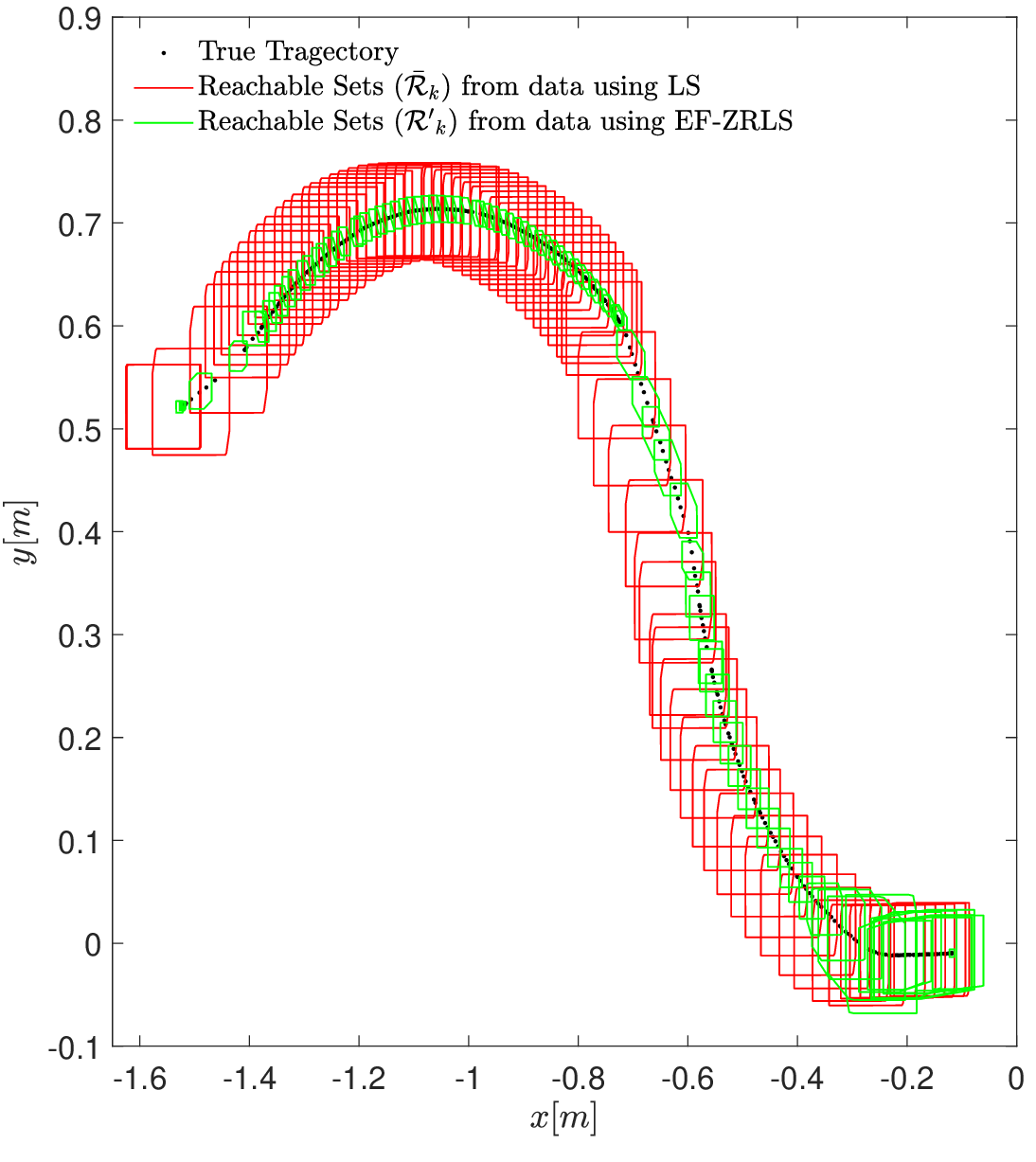}%{Figures/x1x2_moredataV2em6.eps}
        \caption{}
        \label{fig:carA}
    \end{subfigure}
  %  }
%  &
  %      &
%  \resizebox{0.32\textwidth}{!}{
    \begin{subfigure}[h]{0.32\textwidth}
     \centering
        \includegraphics[scale=0.26]{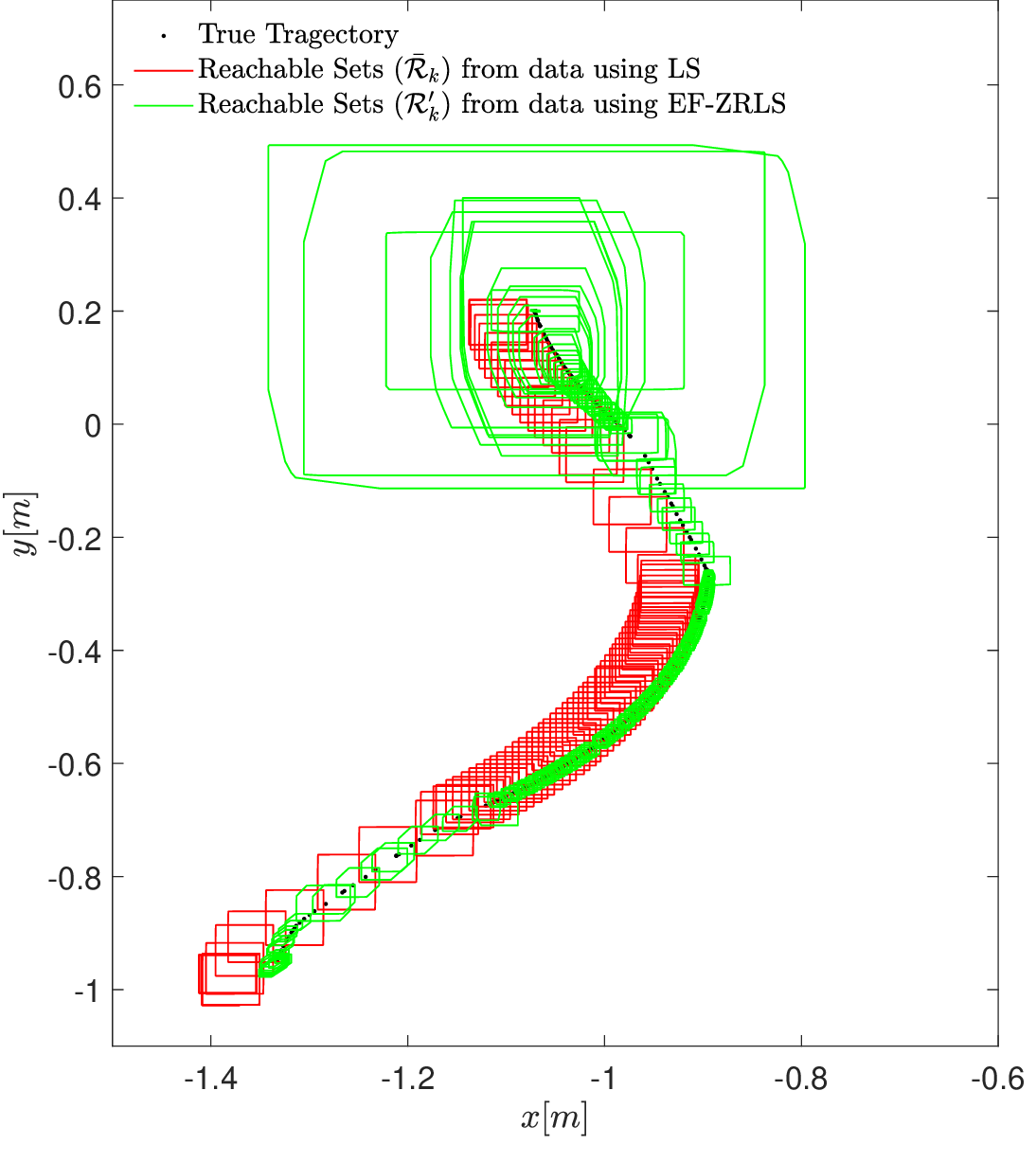}%{Figures/x1x2_moredataV2em5.eps}
        \caption{}
        \label{fig:carB}
    \end{subfigure}
%     }
% \end{tabular}
\caption{The projection of the reachable sets of the Lipschitz nonlinear system in \eqref{eq:sysnonlingen} computed via Algorithm~\ref{alg:LipReachability} (${\mathcal{R}^{\prime}}_k$),  the LS method ($\bar{\mathcal{R}}_k$) in \cite[Algorithm 6]{alanwar2023data}, from noisy input-state data. (a): Nonlinear system of Example 2. (b): Real experiment of autonomous vehicle without payload shown in Figure \ref{fig:jetracer_nopay}. (c): Real experiment of autonomous vehicle with payload shown in Figure \ref{fig:jetracer_payload}.}
    \label{fig:contreach}%\vspace{-4mm}
     \vspace{-1mm}
\end{figure*}

\begin{figure}[htbp]
    \centering
    \begin{subfigure}[b]{0.2\textwidth}
        \centering
        \includegraphics[width=\textwidth]{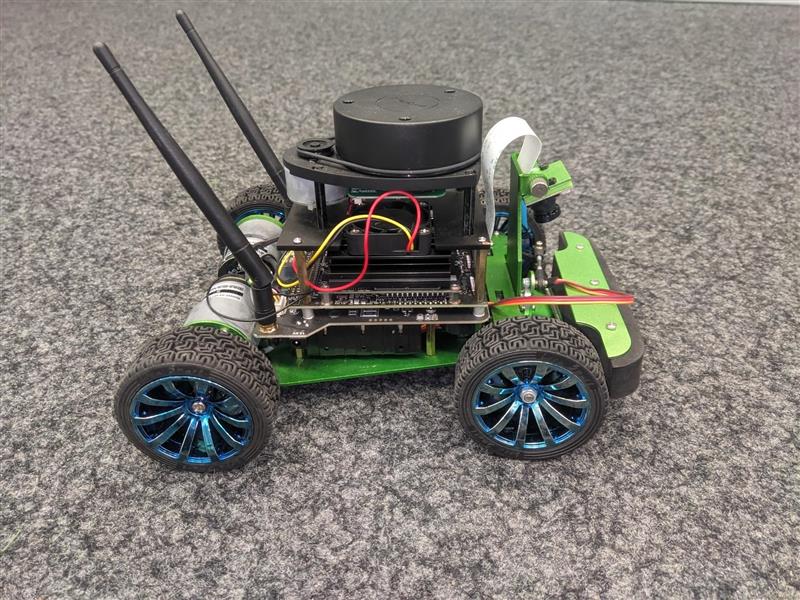}
         \caption{}
        \label{fig:jetracer_nopay}

    \end{subfigure}
    % \hfill % This adds space between the images
    \begin{subfigure}[b]{0.2\textwidth}
        \centering
        \includegraphics[width=\textwidth]{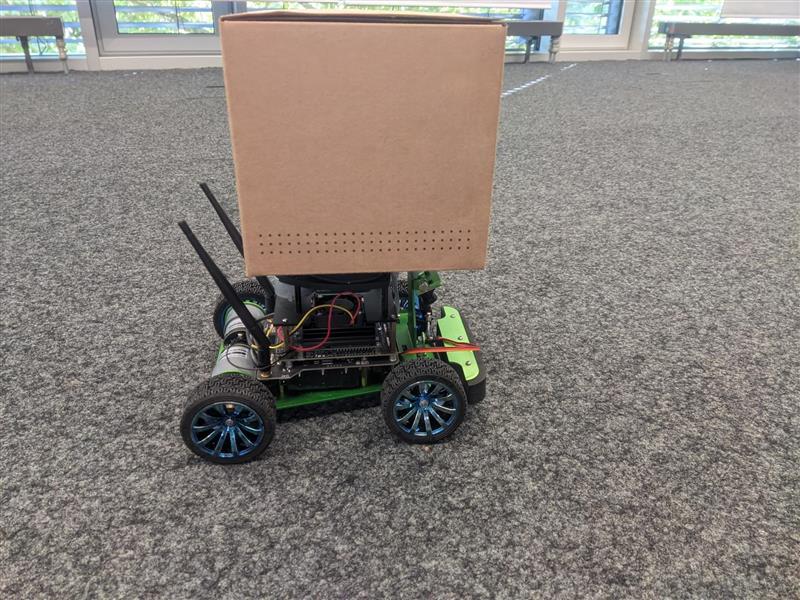}
                \caption{}
        \label{fig:jetracer_payload}

    \end{subfigure}
    \caption{The JetRacer ROS AI Kit. (a): without payload. (b): with 2 $kg$ payload.}
    \label{fig:jetracer}
         \vspace{-1mm}
\end{figure}

%\vspace{-1mm}
\section{Conclusion}\label{sec:con}
We consider the problem of computing the reachable set using noisy data for two classes of time-discrete systems without requiring a mathematical model of the system or any pre-collected data. To do this, we introduced EF-ZRLS, which is an optimal set-based recursive estimation method to estimate a time-varying set of models given noisy input-state data for the multi-output regression models. Then, we over-approximated the reachable sets of LTV and Lipschitz nonlinear systems, which are a superset of the exact reachable sets.\par
In future works, we are going to extend the proposed EF-ZRLS to use adaptive $\lambda$ in \cite{bruggemann2021exponential} to decrease the conservatism of the estimating set of models.

 %\section*{Acknowledgement}

\balance
\section*{Appendix A: Definitions and Proposition}
\begin{definition}[\textbf{Matrix Zonotope Covariance}~\cite{samada2023zonotopic}]\label{de:CovMatrixZono}
    Given the matrix zonotope of the Definition \ref{def:matzonotopes}, the convenience of a matrix zonotope is defined as
\begin{align*}
    \operatorname{cov}(\mathcal{M})&=\sum_{j=1}^{\gamma_{\mathcal{M}}}\tilde{G}^{(j)}_{\mathcal{M}}\tilde{G}^{(j)^\top}_{\mathcal{M}}=\tilde{G}_{\mathcal{M}}\tilde{G}^{\top}_{\mathcal{M}}.
\end{align*}
    
\end{definition}

\begin{definition}[\textbf{Interval Hull} \cite{samada2023zonotopic}]  
Given a zonotope of Definition \ref{def:zonotopes}, the \emph{interval hull} \( b(G_{\mathcal{Z}}) \subset \mathbb{R}^{n_x \times n_x} \) is defined as the smallest axis-aligned box containing \(\mathcal{Z}\), i.e.,  
\[
\mathcal{Z} \subseteq \zono{c_{\mathcal{Z}}, b(G_{\mathcal{Z}})}.
\]
where \( b(G_{\mathcal{Z}}) \) is a diagonal matrix with entries  
\[
b(G_{\mathcal{Z}})_{ii} = \sum_{j=1}^{\gamma_{\mathcal{Z}}} \left| \bigl(G_{\mathcal{Z}}\bigr)_{ij} \right|.
\]  
\end{definition}

\begin{definition}[\textbf{Zonotope Reduction Operator} \cite{samada2023zonotopic}] \label{def:reduction}
Given a zonotope of Definition \ref{def:zonotopes}, the \emph{reduction operator}
\[
\downarrow_q : \mathbb{R}^{n_{\mathcal{Z}} \times \gamma_{\mathcal{Z}}} \to \mathbb{R}^{n_x \times q}
\]
with \( q < \gamma_{\mathcal{Z}} \), produces a reduced generator matrix such that the resulting reduced zonotope satisfies
\[
\zono{c_{\mathcal{Z}}, G_{\mathcal{Z}}} \subseteq \zono{c_{\mathcal{Z}}, \downarrow_q(G_{\mathcal{Z}})}.
\]
\end{definition}

\begin{proposition}
[\textbf{Matrix Zonotope Reduction}]\label{cor:matrix-reduction}
Let a matrix zonotope $\mathcal{M}$ of Definition \ref{def:matzonotopes} and the zonotope $\mathcal{Z}$ of Definition \ref{def:zonotopes} to be the corresponding equivalent vector zonotope, where $G_{\mathcal{Z}} = \begin{bmatrix}\operatorname{vec}(G^{(1)}_{\mathcal{M}}), \dots, \operatorname{vec}(G^{\gamma_{\mathcal{M}}}_{\mathcal{M}})\end{bmatrix}$. If $\downarrow_q$ denotes the zonotope order-reduction operator from Definition \ref{def:reduction}, then a reduced generator matrix for $\mathcal{M}$ is given by
\[
\downarrow_q(\tilde{G}_{\mathcal{M}}) = \operatorname{unvec}(\downarrow_q(G_{\mathcal{Z}})),
\]
and the resulting reduced matrix zonotope satisfies
\[
\mathcal{M} \subseteq \zono{C_{\mathcal{M}},\downarrow_q(\tilde{G}_{\mathcal{M}})}.
\]
\end{proposition}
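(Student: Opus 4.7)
The plan is to reduce the claim to the already-established vector zonotope reduction in Definition \ref{def:reduction} via the canonical \(\operatorname{vec}/\operatorname{unvec}\) bijection between matrix and vector zonotopes stated immediately after Definition \ref{def:matzonotopes}. Concretely, for any \(X \in \mathcal{M}\) there exist scalars \(\beta^{(1)}, \dots, \beta^{(\gamma_{\mathcal{M}})} \in [-1,1]\) with
\[
X = C_{\mathcal{M}} + \sum_{i=1}^{\gamma_{\mathcal{M}}} \beta^{(i)}\, G^{(i)}_{\mathcal{M}}.
\]
Applying the linear map \(\operatorname{vec}\) to both sides and using the definition \(G_{\mathcal{Z}} = [\operatorname{vec}(G^{(1)}_{\mathcal{M}}), \dots, \operatorname{vec}(G^{(\gamma_{\mathcal{M}})}_{\mathcal{M}})]\) and \(c_{\mathcal{Z}} = \operatorname{vec}(C_{\mathcal{M}})\) yields \(\operatorname{vec}(X) \in \mathcal{Z}\) with exactly the same coefficients \(\beta^{(i)}\). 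Hence \(X \mapsto \operatorname{vec}(X)\) is a coefficient-preserving bijection between \(\mathcal{M}\) and \(\mathcal{Z}\).

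Next I would apply Definition \ref{def:reduction} to the vector zonotope \(\mathcal{Z}\) to obtain \(\mathcal{Z} \subseteq \zono{c_{\mathcal{Z}},\, \downarrow_q(G_{\mathcal{Z}})}\). For any \(X \in \mathcal{M}\), this gives coefficients \(\tilde{\beta}^{(1)}, \dots, \tilde{\beta}^{(q)} \in [-1,1]\) such that
\[
\operatorname{vec}(X) = c_{\mathcal{Z}} + \sum_{j=1}^{q} \tilde{\beta}^{(j)} \bigl(\downarrow_q(G_{\mathcal{Z}})\bigr)_{.,j}.
\]
Since \(\operatorname{unvec}\) is linear and inverse to \(\operatorname{vec}\), applying it column-wise recovers
\[
X = C_{\mathcal{M}} + \sum_{j=1}^{q} \tilde{\beta}^{(j)} \, \operatorname{unvec}\!\bigl((\downarrow_q(G_{\mathcal{Z}}))_{.,j}\bigr),
\]
which, by the identification \(\downarrow_q(\tilde{G}_{\mathcal{M}}) := \operatorname{unvec}(\downarrow_q(G_{\mathcal{Z}}))\) read column-by-column, is precisely the statement \(X \in \zono{C_{\mathcal{M}},\, \downarrow_q(\tilde{G}_{\mathcal{M}})}\). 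Since \(X \in \mathcal{M}\) was arbitrary, the desired containment follows.

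The argument is essentially a notational exercise: all the genuine work is already carried out by the vector-zonotope reduction operator of Definition \ref{def:reduction}. The only subtlety to verify, which is where I would spend most care, is dimensional bookkeeping, ensuring that \(\operatorname{unvec}\) applied to each of the \(q\) columns of \(\downarrow_q(G_{\mathcal{Z}}) \in \mathbb{R}^{n_{\mathcal{M}} m_{\mathcal{M}} \times q}\) produces a list of \(q\) matrices in \(\mathbb{R}^{n_{\mathcal{M}} \times m_{\mathcal{M}}}\) that agrees with the matrix zonotope generator convention of Definition \ref{def:matzonotopes}, and that the scalar coefficients \(\tilde{\beta}^{(j)}\) transfer unchanged under this pull-back. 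There is no deeper obstacle; the bijection is the key, and the containment is inherited.
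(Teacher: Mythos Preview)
Your argument is correct and follows essentially the same route as the paper: reduce to the vector-zonotope containment of Definition~\ref{def:reduction} and transport it back via the linear bijection \(\operatorname{vec}/\operatorname{unvec}\). The paper phrases this at the set level (apply \(\operatorname{unvec}\) to both sides of the containment), whereas you unfold it element-by-element with explicit coefficients, but the content is identical.
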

\begin{proof}
Let $\mathcal{Z} = \zono{c_{\mathcal{Z}}, G_{\mathcal{Z}}}$ be the vector zonotope equivalent to $\mathcal{M}$. From Definition \ref{def:reduction}, we know that applying the reduction operator to the generator matrix $G_{\mathcal{Z}}$ results in the containment relation:
\[
\zono{c_{\mathcal{Z}}, G_{\mathcal{Z}}} \subseteq \zono{c_{\mathcal{Z}}, \downarrow_q(G_{\mathcal{Z}})}.
\]
We can apply the linear and bijective map $\operatorname{unvec}(\cdot)$ to both sides, which preserves the containment relation:
\[
\operatorname{unvec}(\zono{c_{\mathcal{Z}}, G_{\mathcal{Z}}}) \subseteq \operatorname{unvec}(\zono{c_{\mathcal{Z}}, \downarrow_q(G_{\mathcal{Z}})}).
\]
By the properties of the $\operatorname{unvec}(.)$ operator and the definition of a matrix zonotope, we have $\mathcal{M} = \operatorname{unvec}(\zono{c_{\mathcal{Z}}, G_{\mathcal{Z}}})$ and 
\[\zono{C_{\mathcal{M}}, \operatorname{unvec}(\downarrow_q(G_{\mathcal{Z}}))} = \operatorname{unvec}(\zono{c_{\mathcal{Z}}, \downarrow_q(G_{\mathcal{Z}})}).\]
Defining the reduced generator matrix for $\mathcal{M}$ as $\downarrow_q(\tilde{G}_{\mathcal{M}}) = \operatorname{unvec}(\downarrow_q(G_{\mathcal{Z}}))$, the above containment simplifies to the desired result
\[
\mathcal{M} \subseteq \zono{C_{\mathcal{M}},\downarrow_q(\tilde{G}_{\mathcal{M}})}.
\]
\end{proof}
\begin{property}[\textbf{Matrix Derivative} \cite{petersen2008matrix}]\label{pr:matrixdrevative}
For given matrices $A, B, C, X$ with conformable dimensions, the following derivative rules hold:
\begin{enumerate}
    \item $ \frac{\partial \text{Tr}(AX^\top)}{\partial X} = A,$
    \item $ \frac{\partial \text{Tr}(AXBX^\top C)}{\partial X} = A^\top C^\top XB^\top + CAXB.$
\end{enumerate}
\end{property}
\section*{Appendix B: Lemmas and Proofs}

\begin{lemma}[\textbf{Matrix Perturbation}]\label{le:matrix_upper_correctedz_inf}
Let $C_{\mathcal{M}_p,0} \in \mathbb{R}^{n_{\mathcal{M}_p} \times m_{\mathcal{M}_p}}$ be an initial nominal matrix. Consider a sequence of matrices $\{C_{\mathcal{M}_p,j}\}_{j=0}^k$ where $C_{\mathcal{M}_p,j+1} = C_{\mathcal{M}_p,j} + \delta C_{\mathcal{M}_p,j}$. By defining a perturbation matrix $\delta C_{\mathcal{M}_p,j} \in \mathbb{R}^{n_{\mathcal{M}_p} \times m_{\mathcal{M}_p}}$ which is bounded by a constant $\mu > 0$ with respect to the element-wise infinity norm, i.e., $\|\delta C_{\mathcal{M}_p,j}\|_{max} \leq \mu$.

Then, the matrix $C_{\mathcal{M}_p,k}$ (after $k$ steps) is contained within the matrix zonotope
\[
C_{\mathcal{M}_p,k} \in \mathcal{M}_{p_k}.
\]
where
\[
\mathcal{M}_{{p_k}} = \zono{C_{\mathcal{M}_p,0}, \check{G}_{\mathcal{M}_{p_k}}}.
\]
with the generator set $\check{G}_{\mathcal{M}_{p_k}} = \Big\{ G^{(1)}_{\mathcal{M}_{p_k}} \dots G^{(\gamma_{\mathcal{M}_p})}_{\mathcal{M}_{p_k}}\Big\}$, where $\gamma_{\mathcal{M}_p} := n_{\mathcal{M}_p} \cdot m_{\mathcal{M}_p}$ and the generators are defined as
\[
G^{(s)}_{\mathcal{M}_{p_k}} := k\mu E_s.
\]
and $E_s$ are the standard basis matrices of size $n_{\mathcal{M}_p} \times m_{\mathcal{M}_p}$ with a single non-zero element of 1. Here, the index $s$ corresponds to the flattened index of the matrix elements.
\end{lemma}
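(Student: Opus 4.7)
The plan is to unroll the recursion and express the cumulative perturbation as a single element of a matrix zonotope with generators scaled by $k$. First I would write
\begin{align*}
C_{\mathcal{M}_p,k} = C_{\mathcal{M}_p,0} + \sum_{j=0}^{k-1}\delta C_{\mathcal{M}_p,j},
\end{align*}
so the task reduces to showing that the cumulative perturbation $\Delta_k := \sum_{j=0}^{k-1}\delta C_{\mathcal{M}_p,j}$ lies in the zero-centered matrix zonotope $\zono{\mathbf{0},\check G_{\mathcal{M}_{p_k}}}$ with generators $G^{(s)}_{\mathcal{M}_{p_k}} = k\mu E_s$, after which adding the center $C_{\mathcal{M}_p,0}$ yields the claim via translation invariance of matrix zonotopes.

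Next I would use the element-wise bound $\|\delta C_{\mathcal{M}_p,j}\|_{\max}\le \mu$ to decompose each perturbation entry-by-entry. Since every entry satisfies $|(\delta C_{\mathcal{M}_p,j})_{i,\ell}|\le \mu$, one can write
\begin{align*}
\delta C_{\mathcal{M}_p,j} = \sum_{s=1}^{\gamma_{\mathcal{M}_p}} \beta_{j}^{(s)}\,\mu\, E_s, \qquad |\beta_{j}^{(s)}|\le 1,
\end{align*}
where the index $s$ runs over the $n_{\mathcal{M}_p} m_{\mathcal{M}_p}$ standard basis matrices $E_s$. Summing over $j=0,\dots,k-1$ and interchanging the order of summation gives
\begin{align*}
\Delta_k = \sum_{s=1}^{\gamma_{\mathcal{M}_p}} \Bigl(\sum_{j=0}^{k-1}\beta_{j}^{(s)}\Bigr)\,\mu\, E_s.
\end{align*}

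The key step is then to rescale the aggregated coefficients. Since each $|\beta_j^{(s)}|\le 1$, the triangle inequality gives $\bigl|\sum_{j=0}^{k-1}\beta_{j}^{(s)}\bigr|\le k$, so one can define $\tilde\beta^{(s)} := \frac{1}{k}\sum_{j=0}^{k-1}\beta_{j}^{(s)}$ with $|\tilde\beta^{(s)}|\le 1$ and rewrite
\begin{align*}
\Delta_k = \sum_{s=1}^{\gamma_{\mathcal{M}_p}} \tilde\beta^{(s)}\,(k\mu E_s).
\end{align*}
By Definition \ref{def:matzonotopes}, this is exactly a point of the matrix zonotope $\zono{\mathbf{0}, \check G_{\mathcal{M}_{p_k}}}$ with $G^{(s)}_{\mathcal{M}_{p_k}}=k\mu E_s$. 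Shifting the center to $C_{\mathcal{M}_p,0}$ completes the proof.

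I do not anticipate a substantive obstacle: the argument is essentially a triangle inequality in the generator coefficients combined with the standard basis expansion of a bounded matrix. The only care point is clarifying the bijection between the index $s$ and the element position $(i,\ell)$ so that $\gamma_{\mathcal{M}_p}=n_{\mathcal{M}_p}m_{\mathcal{M}_p}$ generators suffice; once that indexing is fixed, the rescaling to unit-bounded coefficients is immediate.
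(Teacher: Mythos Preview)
Your proposal is correct and follows essentially the same approach as the paper: unroll the recursion, bound the accumulated perturbation entrywise by $k\mu$ via the triangle inequality, and represent it in the standard basis $E_s$ with unit-bounded coefficients scaled by $k\mu$. The only cosmetic difference is that you decompose each $\delta C_{\mathcal{M}_p,j}$ in the basis first and then sum, whereas the paper sums first and then decomposes the total $\Delta C_{\mathcal{M}_p}$; both arrive at the identical rescaling step.
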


\begin{proof}
The matrix $C_{\mathcal{M}_p,k}$ is given by the sum of successive perturbations
\[
C_{\mathcal{M}_p,k} = C_{\mathcal{M}_p,0} + \sum_{j=0}^{k-1} \delta C_{\mathcal{M}_p,j}.
\]
The perturbation matrix $\delta C_{\mathcal{M}_p,j}$ is bounded by $\|\delta C_{\mathcal{M}_p,j}\|_\infty \leq \mu$, which directly implies that for every element $(i,l)$ of the matrix, $|(\delta C_{\mathcal{M}_p,j})_{(i,l)}| \leq \mu$.

Now, for any element $(i,l)$
\[
\left(\sum_{j=0}^{k-1} \delta C_{\mathcal{M}_p,j}\right)_{(i,l)} = \sum_{j=0}^{k-1} (\delta C_{\mathcal{M}_p,j})_{(i,l)}.
\]
Since $|(\delta C_{\mathcal{M}_p,j})_{(i,l)}| \leq \mu$ for each term in the sum, the sum of these elements is bounded as:
\[
\left|\sum_{j=0}^{k-1} (\delta C_{\mathcal{M}_p,j})_{(i,l)}\right| \leq \sum_{j=0}^{k-1} |(\delta C_{\mathcal{M}_p,j})_{(i,l)}| \leq \sum_{j=0}^{k-1} \mu = k\mu.
\]
Thus, the sum of perturbations is a matrix whose elements are bounded by $k\mu$. Let's call the sum of perturbations $\Delta C_{\mathcal{M}_p}$. Each element $(\Delta C_{\mathcal{M}_p})_{(i,l)}$ satisfies $|(\Delta C_{\mathcal{M}_p})_{(i,l)}| \leq k\mu$. We can express $\Delta C_{\mathcal{M}_p}$ as a linear combination of standard basis matrices:
\[
\Delta C_{\mathcal{M}_p} = \sum_{i=1}^{n_{\mathcal{M}_p}}\sum_{l=1}^{m_{\mathcal{M}_p}} (\Delta C_{\mathcal{M}_p})_{(i,l)} E_{i,l}.
\]
By using the flattened index $s$ for the pair $(i,l)$, we can rewrite this sum as
\[
\Delta C_{\mathcal{M}_p} = \sum_{s=1}^{\gamma_{\mathcal{M}_p}} c_s E_s,
\]
where $E_s$ is a standard basis matrix and $|c_s| \le k\mu$. We can define a set of coefficients $|\beta| \le 1$ such that $c_s = k\mu {\beta}$. Substituting this back into the sum, we get:
\[
\Delta C_{\mathcal{M}_p} = \sum_{s=1}^{\gamma_{\mathcal{M}_p}}{\beta} (k\mu E_s).
\]
This is precisely the definition of the matrix zonotope with center zero and generators $k\mu E_s$. Since $C_{\mathcal{M}_p,k} = C_{\mathcal{M}_p,0} + \Delta C_{\mathcal{M}_p}$, the matrix $C_{\mathcal{M}_p,k}$ is contained within the matrix zonotope $\mathcal{M}_{p_k}$ with center $C_{\mathcal{M}_p,0}$ and the defined generators.
\end{proof}

\bibliographystyle{plain}
\bibliography{ref}

\end{document}